\def\ps@headings{%
\def\@oddhead{\mbox{}\scriptsize\rightmark \hfil \thepage}%
\def\@evenhead{\scriptsize\thepage \hfil \leftmark\mbox{}}%
\def\@oddfoot{}%
\def\@evenfoot{}}
\makeatother \pagestyle{headings}
\newtheorem{lemma}{Lemma}
\begin{document}
\title{Collaborative  Multi-Agent Multi-Armed Bandit Learning for Small-Cell Caching}
\author{\IEEEauthorblockN{Xianzhe Xu, Meixia Tao, \IEEEmembership{Fellow,~IEEE}, and Cong Shen, \IEEEmembership{Senior Member,~IEEE}}\\
\thanks{X. Xu and M. Tao are with the Department of Electronic Engineering, Shanghai
Jiao Tong University, Shanghai 200240, China (Emails: august.xxz@sjtu.edu.cn;
mxtao@sjtu.edu.cn). C. Shen is with the Charles L. Brown Department of Electrical and Computer Engineering, University of Virginia, Charlottesville, VA 22904, USA. (Email: cong@virginia.edu). Part of this work was presented in IEEE Globecom 2018 \cite{xxz_123}. This work is supported in part by the NSF of China under grant 61941106 and by the National Key R$\&$D Project of China under grant 2019YFB1802702.}}

\maketitle
\vspace{-1.5cm}
\begin{abstract}
This paper investigates learning-based caching in small-cell networks (SCNs) when user preference is unknown. The goal is to optimize the cache placement in each small base station (SBS) for minimizing the system long-term transmission delay. We model this sequential multi-agent decision making problem in a multi-agent multi-armed bandit (MAMAB) perspective. Rather than estimating user preference first and then optimizing the cache strategy, we propose several MAMAB-based algorithms to directly learn the cache strategy online in both stationary and non-stationary environment. In the stationary environment, we first propose two high-complexity agent-based collaborative MAMAB algorithms with performance guarantee. Then we propose a low-complexity distributed MAMAB which ignores the SBS coordination. To achieve a better balance between SBS coordination gain and computational complexity, we develop an edge-based collaborative MAMAB with the coordination graph edge-based reward assignment method. In the non-stationary environment, we  modify the MAMAB-based algorithms proposed in the stationary environment by proposing a practical initialization method and designing new perturbed terms to adapt to the dynamic environment. Simulation results are provided to validate the effectiveness of our proposed algorithms. The effects of different parameters on caching performance are also discussed.
\end{abstract}
\begin{IEEEkeywords}
Wireless caching, multi-armed bandit, multi-agent, coordination graph, small-cell networks, reinforcement learning.
\end{IEEEkeywords}
\section{Introduction}
%
%
%
In recent decades, the proliferation of mobile devices results in a steep growth of mobile data traffic, which imposes heavy pressure on backhaul links with limited capacity in cellular networks. Due to the fact that only a small number of files accounts for the majority of the data traffic, caching popular files at small base stations (SBSs) has been widely adopted to reduce the traffic congestion and alleviate the backhaul pressure \cite{wang2014cache,bastug2014living}. Many previous works have studied the cache placement problem with various cache strategies and performance metrics. In \cite{peng2015backhaul,shanmugam2013femtocaching,bacstu?2015cache}, the authors focus on the deterministic caching that SBSs cache files entirely without partitioning. The works \cite{blaszczyszyn2015optimal,chae2016caching,7562510} consider the probabilistic caching that SBSs cache files according to certain well-designed probabilities. The authors in \cite{6763007,xu2017modeling,8017548} consider the coded caching that files are partitioned into multiple segments, and these segments after coding are cached at SBSs. However, these works on wireless caching rely on a strong assumption that file popularity or user preference is perfectly known in advance, which is somehow unrealistic in general. Therefore, caching design without the knowledge of file popularity or user preference is a necessary but challenging issue. The aim of this work is to address the above issue by investigating the caching design when user preference is unknown.

Several works have studied the wireless caching problem without the knowledge of file popularity or user preference. Some of them tackle this problem by first estimating file popularity (or the number of requests) and then optimizing the cache strategy based on the estimated file popularity (or estimated number of requests) \cite{7524380,li2016trend,blasco2014learning,blasco2014multi,Sengupta2014,song2017learning,8466606,8241758}. The authors in \cite{blasco2014learning,blasco2014multi} estimate the file popularity by using multi-armed bandit (MAB) and then the SBS caches the most popular files in the single SBS scenario. In \cite{Sengupta2014}, the authors utilize MAB to estimate the local popularity and then optimize the coded caching in the multiple SBSs scenario. The work \cite{song2017learning} also utilizes MAB to estimate local popularity of each SBS and then optimize the cache strategies to minimize the total cost of multiple SBSs with centralized and distributed algorithms. In \cite{8241758}, the authors estimate the local popularity and global popularity first, and then utilize Q-learning to choose cache actions with local popularity, global popularity and cache strategy as state.

{Note that all these works \cite{blasco2014learning,blasco2014multi,Sengupta2014,7524380,song2017learning,li2016trend,8466606,8241758} only focus on the file popularity (local or global) or the requests collected from multiple users in a cell since they either assume that different users have the same preference or adopt cache hit as the performance metric. However, different users {may} have different preferences in practice. Moreover, {adjacent} SBSs need to design {their} cache strategies collaboratively {as they can communicate with a common set of users}. If we consider file popularity from a set of users rather than user preference, SBSs cannot know whether these requests are from the users that are only covered by one SBS or the users that are covered by multiple SBSs. Therefore, SBSs cannot design the cache strategy collaboratively. Thus, some recent works}\cite{muller2017context,jiang2018user,8510864,8425746} focus on the caching strategy design based on user preference rather than file popularity when user preference is unknown. Recently, the work \cite{muller2017context} utilizes contextual MAB to learn the context-specific content popularity online and takes the dependence of user preference on their context into account. With the estimated context-specific content popularity, SBSs cache files to maximize the accumulated cache hits. In \cite{jiang2018user}, the authors propose an online content popularity prediction algorithm by leveraging the content features and user preference. Then they propose two learning-based edge caching architectures to design the caching policy. Though the authors in \cite{muller2017context,jiang2018user} consider the multiple SBSs scenario, the correlations among SBSs are not considered. In \cite{8510864}, the authors propose a new Bayesian learning method to predict personal preference and estimate the individual content request probability. Based on the estimated individual content request probability, a reinforcement learning algorithm is proposed to optimize the deterministic caching by maximizing the system throughput. The work \cite{8425746} first utilizes the expectation maximization method to estimate the user preference and activity level based on the probabilistic latent semantic analysis model. A low-complexity algorithm is then developed to optimize the D2D cache strategies by maximizing the offloading probability based on the estimated user preference and activity level. However, {a training phase} is required in \cite{8510864,8425746}, which cannot well adapt to the time-varying environment, comparing to online methods. Due to the fact that context information of users is private and the historical {request} number of a single user is often very limited, online estimation of user preference is very difficult in practice. Therefore, some works try to make caching decisions directly, rather than estimating user preference first and then optimizing cache, to minimize the long-term cost or maximize the long-term reward by using reinforcement learning \cite{8491205,guo2017caching} and deep reinforcement leaning \cite{8362276}. However, these works \cite{8491205,guo2017caching,8362276} only focus on the single-cell performance and no coordination among multiple SBSs exists. To our best knowledge, an online collaborative caching problem without the prior knowledge of user preference at the user level has not been well studied in the literature.

In this paper, we aim to optimize the collaborative caching among multiple SBSs by minimizing the {accumulated transmission delay over a finite time horizon} in cache-enabled wireless networks when user {preference is} unknown. The cache strategies need to be optimized collaboratively since each user can be covered by multiple SBSs and experiences different delays when it is served by different SBSs. {Thus, SBSs need to learn and decide the cache strategies collaboratively at each time slot over a finite time horizon. Since earlier decision of caching would shape our knowledge of caching rewards that our future action depends upon, this problem is fundamentally a sequential multi-agent decision making problem. Rather than estimating user preference first and then optimizing the cache strategy,
we learn the cache strategies directly at SBSs online by utilizing multi-agent MAB (MAMAB). This is because estimating the preference of each individual user is difficult due to the limited number of requests from a single user. Besides, contextual information of users, such as age and gender may not be utilized to estimate user preference due to privacy concerns. On the other hand,  the caching reward in the MAMAB framework considered in our work is defined as the transmission delay reduction compared to the case without caching. It represents the aggregate effect of all user requests and hence can be learned more accurately and with faster speed.
Our prior conference paper \cite{xxz_123} considered the distributed and edge-based collaborative MAMAB algorithms only in the stationary environment, without the theoretical proof of performance guarantee.} The main contributions and results of this {journal version} are summarized as follows.

\begin{itemize}
	\item{We formulate the collaborative caching optimization problem to minimize the {accumulated} transmission delay {over a finite time horizon} in cache-enabled wireless networks without the knowledge of user preference. By defining the reward of caching, we first model this sequential decision making problem in a MAMAB perspective, which is a classical reinforcement learning framework. We then solve this MAMAB problem in both stationary and non-stationary environment.}
	\item{In the stationary environment, we first propose two high-complexity agent-based collaborative MAMAB algorithms with the performance guarantee that the regret is bounded by $\mathcal{O}(\log{(T_{\text{total}})})$, where $T_{\text{total}}$ is the {total number of time slots.} To reduce the computational complexity, we also propose a distributed MAMAB algorithm, which ignores the SBSs coordination. By taking both SBS coordination and computational complexity into account, we propose an edge-based collaborative MAMAB algorithm based on the coordination graph edge-based reward assignment method. We also modify the above MAMAB algorithms by designing new perturbed terms to achieve a better performance.}
\item{In the non-stationary environment, we modify the MAMAB algorithms {proposed in the stationary environment by proposing a practical initialization method and designing new perturbed terms in order} to adapt to the dynamic environment.}

\item{Simulation results are provided to demonstrate the effectiveness of our proposed algorithms in both stationary (synthesized data) and non-stationary environment (Movielens data set) by comparing with some benchmark algorithms. The effects of communication distance, cache size {and user mobility on caching performance} are also discussed.}
\end{itemize}


%
%
%

The rest of the paper is organized as follows. The system model is introduced in Section \uppercase\expandafter{\romannumeral2}. Then we formulate the problem in Section \uppercase\expandafter{\romannumeral3}. In Section \uppercase\expandafter{\romannumeral4} and Section \uppercase\expandafter{\romannumeral5}, we solve this problem in stationary and non-stationary environment, respectively. We then present the simulation results in Section \uppercase\expandafter{\romannumeral6}. Finally, we conclude the paper in Section \uppercase\expandafter{\romannumeral7}.

\emph{Notations}: This paper uses bold-face lower-case $\mathbf{h}$ for vectors and bold-face uppercase $\mathbf{H}$ for matrices. $\mathbf{0}_{m\times 1}$ denotes the $m\times 1$ zero vector. $\mathbb{I}\{X\}$ denotes the indicator operator that $\mathbb{I}\{X\}=1$ if the event $X$ is true and $\mathbb{I}\{X\}=0$ otherwise.
\section{System Model}
\subsection{Network Model}
We consider a SCN where cache-enabled SBSs and users are {located} in a finite region as illustrated in Fig. \ref{fig:coordination_graph}. The sets of users and SBSs are denoted as $\mathcal{U}=\{1,2,\ldots,U\}$ and $\mathcal{M}=\{1,2,\ldots,M\}$, respectively. We assume that SBSs have a limited communication distance $l_c$, which means that SBS $m$ can communicate with user $u$ if the distance between them $l_{m,u}$ does not exceed $l_c$. Thus, we define the neighbor SBSs of user $u$ as the set of SBSs that can communicate with user $u$, which is denoted as $\mathcal{M}_u=\{m\in\mathcal{M}|l_{m,u}\leq l_c\}$. We sort the distance $l_{m,u}$ for $m\in\mathcal{M}_u$ in an increasing order such that {$j_u$} denotes the index of the $j$-th nearest SBS to user $u$. Similarly, the set of the neighbor users of SBS $m$ is denoted as $\mathcal{U}_m=\{u\in\mathcal{U}|l_{m,u}\leq l_c\}$.
\begin{figure}[htb]
\begin{centering}
\includegraphics[scale=.45]{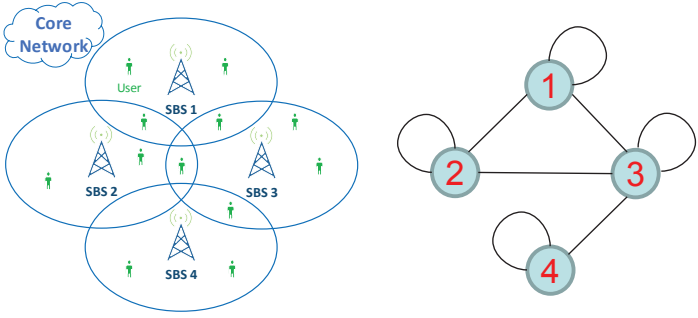}
\caption{\small{System model and its corresponding coordination graph.}} \label{fig:coordination_graph}
\end{centering}
\vspace{-0.4cm}
\end{figure}
\subsection{Cache Model}
There is a file library $\mathcal{F}$ with size $|\mathcal{F}|$ and all files are assumed to be the same normalized size of $1$\footnote{{For the unequal size case, we can divide each files into chunks of equal size, so the analysis and algorithms in this paper still can be applied.}}. Each SBS is equipped with a local cache with size $S<|\mathcal{F}|$. Define a cache matrix $\mathbf{A}^t\in\{0,1\}^{M\times |\mathcal{F}|}$ to denote the cache strategy of SBSs at time slot $t$. The cache matrix should satisfy the cache size constraint $\sum_{f=1}^{|\mathcal{F}|}a_{m,f}^t\leq S$ for $m\in\mathcal{M}$ at each time slot.
\subsection{Service Model}
We adopt the total transmission delay as the performance metric. When SBS $m$ serves its neighbor user $u$ by transmitting a file with unit size, the user will experience a transmission delay $d_{m,u}$. {We model the transmission delay in the noise-limited network and hence interference can be neglected. By considering the large-scale fading only,} the signal-to-noise ratio (SNR) between user $u$ and SBS $m$ is given by $\text{SNR}_{m,u}=\frac{Pl_{m,u}^{-\alpha}}{\sigma_N^2}$,
where $P$ is the transmission power of SBSs, $\alpha>2$ is the path loss exponent and $\sigma_N^2$ is the Gaussian white noise power. The transmission delay between user $u$ and its neighbor SBS $m\in \mathcal{M}_u$ by transmitting a file with unit size is $d_{m,u}=\frac{1}{W\log_2(1+\text{SNR}_{m,u})}$,
where $W$ is the bandwidth. Note that users will experience a larger transmission delay if they are served by farther SBSs. For the non-neighbor SBSs $m\notin \mathcal{M}_u$, the transmission delay $d_{m,u}=\infty$ {since the communication link cannot be established between them.}

Each user requests for files according to its own preference. {The requests of user $u$ at time slot $t$ is denoted as a set $\mathcal{Q}_u^t$ with $|\mathcal{Q}_u^t|\in\mathbb{N}$. This means that a user $u$ can request any number of files at a time slot since a time slot contains several hours or even one day in practice. Therefore, SBSs can satisfy multiple requests at a time slot.} For a user $u$ requesting file $f$ at a given time instance, it will be served by the nearest neighbor SBS that caches file $f$. If none of its neighbor SBSs caches file $f$, the core network will satisfy this request, which induces a much larger transmission delay $d_0$ comparing to the delay when requests are served by SBSs locally, i.e., $d_0\gg d_{m,u}$ for $m\in\mathcal{M}_u$. {Note that when $l_c$ is large enough, all users can be connected to all SBSs in $\mathcal{M}$. This fully connected network is actually a special case of our problem. All parameters are summarized in Table \ref{tab2}.}

\begin{table}[h]
\vspace{-0.2cm}
\centering
\caption{{Notation Table}} \label{tab2}
\begin{tabular}{|c|c|}
\hline
\textbf{Notation} & \textbf{Definition} \\
\hline
$\mathcal{M}$ & SBSs set    \\
$\mathcal{U}$ & Users set   \\
$\mathcal{F}$ & Files set   \\
$\mathcal{M}_u$  & Neighbor SBSs set of user $u$   \\
$\mathcal{U}_m$  & Neighbor users set of SBS $m$   \\
$\mathbf{A}$  & Cache matrix   \\
$\mathcal{Q}_u$  & Requests set of user $u$  \\
$S$  & Cache size   \\
$j_u$            & Index of $j$-th nearest SBS of user $u$ \\
$l_c$            & Communication distance \\
$d_{m,u}$            & Transmission delay between SBS $m$ and user $u$ \\
$d_0$            & Transmission delay of the core network \\
\hline
\end{tabular}
\vspace{-0.4cm}
\end{table}

\section{Problem Formulation}
Our goal is to design cache strategies $\mathbf{A}^t$ online to minimize the accumulated transmission delay over a time horizon $T_{\text{total}}$ when {user preference is unknown.} In this section, we model this cache optimization problem as a sequential multi-agent decision making problem and reformulate this problem in a MAMAB perspective.

{For a user $u$ that requests file $f$ at time slot $t$, it can be satisfied by either one neighbor SBS or the core network. If at least one neighbor SBS that caches the requested file of the user, the user will be served by the nearest SBS that caches the requested file. In this case, the transmission delay depends on the nearest neighbor SBS that caches the requested file, regardless of other farther SBSs. If none of the neighbor SBSs caches the requested file, i.e., $a_{m,f}^t=0$ for all $m\in\mathcal{M}_u$, the user will be served by the core network and experience a much larger delay $d_0$. Since we focus on the transmission delay from the user perspective, the total transmission delay of the network at time slot $t$ is the sum of all individual delays for all users, which is given by:}
\begin{align}
D_t&=\sum_{u=1}^U \sum_{f=1}^{|\mathcal{F}|}\mathbb{I}\{f\in\mathcal{Q}_u^t\}\bigg[\sum_{j=1}^{|\mathcal{M}_u|}a_{j_u,f}^td_{j_u,u}\prod_{n=1}^{j-1}(1-a_{n_u,f}^t) \nonumber\\
&~~~~~~~+\underset{m\in\mathcal{M}_u}{\prod}(1-a_{m,f}^t)d_0\bigg]
.  \label{eqn:delay_general}
\end{align}

We aim to minimize the accumulated transmission delay of the network over the time horizon $T_{\text{total}}$ by {learning and optimizing} the cache strategy at each time slot $t$ when the requests $\mathcal{Q}_u^t$ of all users are not known in advance. The optimization problem is formulated as:
\addtocounter{equation}{1}
\begin{align}
\text{\textbf{P1:}} \quad \underset{\{\mathbf{A^t}\}}{\text{min}} \quad & \sum_{t=1}^{T_{\text{total}}} D_t  \tag{\theequation a} \label{eqn:accumu_delay}   \\
\text{s.t} \quad &\sum_{f=1}^{|\mathcal{F}|} a_{m,f}^t \leq S,~\forall m\in\mathcal{M}~\text{and}~ t=1,2,\ldots,T_{\text{total}}, \tag{\theequation b} \label{eqn:cons11}   \\
 &\mathbf{A}^t\in\{0,1\}^{M\times |\mathcal{F}|}, ~~t=1,2,\ldots,T_{\text{total}}.     \tag{\theequation c}  \label{eqn:cons22}
\end{align}

{Note that the optimization problem $\textbf{P1}$ is a classical caching optimization problem and has been solved if user requests $\mathcal{Q}_u$ and transmission delays between SBSs and users $d_{m,u}$ are perfectly known in advance, which is very difficult in practice. Most traditional learning-based caching works estimate the file popularity first and then optimize the cache strategy since they adopt cache hit rate as the performance metric and assume that different users have the same preference. While for optimization problem $\textbf{P1}$, estimating requests of a single user online is much more difficult than estimating file popularity since the request number of a single user is much smaller than the request number from multiple users in a cell. Besides, users contextual information, such as age and gender, are private and we cannot utilize them to group users and learn preference from a group of users as work \cite{muller2017context} does.\footnote{To our best knowledge, there is no work that estimates user preference $p_{u,f}$ online. References \cite{8510864,8425746} obtain the accurate values of estimated user preference in the training phase, which cannot adapt to time-varying environment well.} Therefore, the conventional predict-then-optimize approach cannot be utilized in this problem. Besides, the optimization problem $\textbf{P1}$ is still NP-complete \cite{shanmugam2013femtocaching} even when the user requests and transmission delays between SBSs and users are perfectly known. Note that the proposed greedy algorithm in \cite{shanmugam2013femtocaching} to solve $\textbf{P1}$ has a high computational complexity since the greedy algorithm needs to exhaustively search all feasible file-SBS pairs and choose the best one at each step. This step is repeated until the caches of all SBSs are full.} Therefore, instead of estimating the user requests first and then optimizing the cache strategy, we would like to learn the cache strategy directly with a low-complexity algorithm to make {sequential caching decisions.}  {Note that reinforcement learning is an effective tool to solve sequential decision making problems. Rather than applying other reinforcement learning algorithms, such as Q-learning, we model this multi-agent sequential decision making problem in a MAMAB perspective since MAB represents a simple class of online learning models with fewer assumptions of the environment. Furthermore, we shall show that the performance of the proposed MAMAB algorithms can be theoretically guaranteed later in Lemma 2.}

Before reformulating \textbf{P1} as a MAMAB problem, we define the reward of caching as the transmission delay reduction compared to the case without caching. {Note that SBSs can observe the transmission delay if they serve their neighbor users with caching files. Besides, all SBSs know the transmission delay $d_0$ when users are served by the core network. Therefore, SBSs can observe the transmission delay and calculate the reward of caching files if they serve their neighbor users.} The reward of SBS $m$ by caching file $f$ at time slot $t$ is given by:
\begin{align}
r_{m,f}^t&=a_{m,f}^t\sum_{u\in\mathcal{U}_m}\mathbb{I}\{f\in\mathcal{Q}_u^t\} \left(d_0-d_{m,u}\right)  \nonumber\\
&~~~~~~~~~~\sum_{j=1}^{|\mathcal{M}_u|}\mathbb{I}\{m=j_u\}\prod_{n=1}^{j-1}(1-a_{n_u,f}^t).  \label{eqn:reward22}
\end{align}
By defining the reward, we can reformulate \textbf{P1} without loss of optimality as:
\begin{align}
\text{\textbf{P2:}} \quad \underset{\{\mathbf{A}^t\}}{\text{max}} \quad & r_{\text{total}}=\sum_{t=1}^{T_{\text{total}}}\sum_{m=1}^M\sum_{f=1}^{|\mathcal{F}|}r_{m,f}^t  \label{eqn:reward123}  \\
\text{s.t} \quad & (\ref{eqn:cons11}), (\ref{eqn:cons22}).   \nonumber
\end{align}

{Optimization problem $\textbf{P2}$ is a MAMAB problem, where each SBS is seen as an agent and each file is seen as an arm. SBSs decide the cache strategy based on its cumulative knowledge to maximize the accumulated reward with unknown distributions of the reward for caching different files. We can solve the optimization problem $\textbf{P2}$} by estimating the reward $r_{m,f}^t$ and then designing cache strategy at each time slot. Actually, $\textbf{P2}$ can be divided into $T_{\text{total}}$ independent sub-optimization problems, one for each time slot, called the single-period optimization (SPO) problem. For each SPO problem, the cache strategy is optimized with the estimated reward $r_{m,f}^t$ based on the historical reward observations. The objective of $\textbf{P2}$ is to maximize the accumulated reward {over a time horizon based on its cumulative knowledge}. Therefore, there exists a tradeoff between exploration (i.e., caching all files a sufficient number of times to estimate the reward more accurately) and exploitation (i.e., caching files to maximize the estimated reward).

For this MAMAB problem, its main difficulty compared to the single agent MAB problem is that there {exists} collisions among different agents (SBSs), i.e., for a user requesting a file, only the nearest neighbor SBS that caches the requested file can serve the user and obtain the reward while {farther} neighbor SBSs obtain no reward even if they cache the requested file. Therefore, {how to make} the caching decisions coordinately is challenging. In this paper, we utilize a coordination graph $G=(V,E)$ \cite{guestrin2002multiagent,kok2006collaborative} to represent the dependencies among SBSs, as shown in Fig. \ref{fig:coordination_graph}. In the coordination graph, each SBS $m\in V$ represents a vertex. An edge $(m,n)\in E$ indicates that the corresponding SBS $m$ and $n$ cover the common users, i.e., $\mathcal{U}_m\cap \mathcal{U}_n\neq\emptyset$, which means that the cache strategies of SBS $m$ and $n$ need to be designed coordinately. Specially, each SBS $m$ has a self edge $(m,m)$. We define the neighbor of the vertex $m$ in the coordination graph as $\Gamma(m)\triangleq \{n\in \mathcal{M}\mid\mathcal{U}_m\cap \mathcal{U}_n\neq\emptyset,m\neq n\}$. As we can see in (\ref{eqn:reward22}), the reward $r_{m,f}^t$ of SBS $m$ by caching file $f$ depends on the {joint cache action} of SBSs $m$ and $\Gamma(m)$, which is denoted as $\textbf{b}_{m,f}^t=[a_{m,f}^t,a_{m_1,f}^t,\ldots,a_{m_{\Gamma(m)},f}^t]$. Therefore, $r_{m,f}^t$ can be rewritten as $r_{m,f}^t(\textbf{b}_{m,f})$.
Due to the fact that the user requests are unknown before designing the cache strategy, we can not analyse the property of the reward $r_{m,f}^t(\textbf{b}_{m,f})$ to solve the optimization problem. In this case, we can utilize MAMAB to estimate the reward $r_{m,f}^t(\textbf{b}_{m,f})$ and then decide the cache strategy directly.

\section{Cache Strategy in Stationary Environment}
In this section, we aim to solve $\textbf{P2}$ in the stationary environment. We first introduce the stationary environment and give the specific form of $\textbf{P2}$. Then we solve this problem with MAMAB-based methods in both distributed and collaborative manners. Finally, we modify the MAMAB algorithms by designing new perturbed terms to achieve a better performance.
\subsection{Stationary Environment}
The stationary environment is described as follows. The file library $\mathcal{F}$ is a static set with fixed size $F$ and user preference is time-invariant. Denote $p_{u,f}$ as the probability that user $u$ requests file $f$, which satisfies $0\leq p_{u,f}\leq 1$ and $\sum_{f=1}^Fp_{u,f}=1$. Each user is assumed to request one file at each time slot independently according to its own user preference. Moreover, the locations of users are fixed. Thus, transmission delays $d_{m,u}$ are time-invariant and can be observed when SBSs serve users or computed if the communication distance is known.

In this stationary environment, the expected reward of SBS $m$ by caching file $f$ is given by:
\begin{align}
r_{m,f}(\textbf{b}_{m,f})&=a_{m,f}\sum_{u\in\mathcal{U}_m}p_{u,f} \left(d_0-d_{m,u}\right) \nonumber\\
&~~~~~~~~~~\sum_{j=1}^{|\mathcal{M}_u|}\mathbb{I}\{m=j_u\}\prod_{n=1}^{j-1}(1-a_{n_u,f}).  \label{eqn:reward2}
\end{align}

Since user preference is time-invariant, the solutions of different SPO problems of $\textbf{P2}$ are identical. Hence, we only need to focus on one SPO problem, which is given by:
\addtocounter{equation}{1}
\begin{align}
\text{\textbf{P3:}} \quad \underset{\mathbf{A}}{\text{max}} \quad & R_{\text{total}}(\textbf{A})=\sum_{m=1}^M\sum_{f=1}^Fr_{m,f}(\textbf{b}_{m,f}) \tag{\theequation a}   \label{eqn:reward} \\
\text{s.t} \quad &\sum_{f=1}^F a_{m,f} \leq S, ~~~  m=1,2,\ldots,M  \tag{\theequation b}  \\
&a_{m,f}\in\{0,1\}, ~~~  m=1,2,\ldots,M.  \tag{\theequation c}
\end{align}
%

Then we aim to solve the SPO problem $\textbf{P3}$ by using MAMAB-based algorithms in both distributed and collaborative manners with different reward decomposition methods.
\subsection{Agent-based Collaborative MAMAB}
In the following, we propose two agent-based collaborative MAMAB algorithms with different reward decomposition methods from the SBS perspective and user perspective, respectively.
\subsubsection{SBS perspective}
From (\ref{eqn:reward}), it is observed that the total reward function $R_{\text{total}}(\textbf{A})$ is decomposed into a linear combination of local agent-dependent {reward functions} $r_{m,f}(\textbf{b}_{m,f})$. The reward that SBS $m$ can obtain depends on the joint $\textbf{b}_{m,f}\in\{0,1\}^{1\times(\Gamma(m)+1)}$. In the initial phase $t=0$, {SBSs cache files randomly to make each joint action $\textbf{b}_{m,f}$ appear at least once. Note that $t=0$ can contain multiple time slots actually.} The number of times that $\textbf{b}_{m,f}$ appears until $t$ is denoted as $T_{m,f}^t(\textbf{b}_{m,f})$. Besides, if $\textbf{b}_{m,f}$ appears at time slot $t$, we update the average reward $\overline{R}_{m,f}^t(\textbf{b}_{m,f})$ of SBS $m$ with joint action $\textbf{b}_{m,f}$ as:
\begin{align}
\overline{R}_{m,f}^t(\textbf{b}_{m,f})=\frac{\left[T_{m,f}^t(\textbf{b}_{m,f})-1\right]\overline{R}_{m,f}^{t-1}(\textbf{b}_{m,f})+r_{m,f}^t(\textbf{b}_{m,f})}{T_{m,f}^t(\textbf{b}_{m,f})}, \label{eqn:ave_agent1}
\end{align}
where $r_{m,f}^t(\text{b}_{m,f})$ is the observed reward of SBS $m$ with joint action $\textbf{b}_{m,f}$ at time slot $t$. Each SBS $m$ needs to share its cache strategy with its neighbor $\Gamma(m)$ in the coordination graph in order to update $T_{m,f}^t(\textbf{b}_{m,f})$ and $\overline{R}_{m,f}^t(\textbf{b}_{m,f})$. Note that the average reward $\overline{R}_{m,f}^t(\textbf{b}_{m,f})$ is inaccurate when $T_{m,f}^t(\textbf{b}_{m,f})$ is small and hence utilizing the the average reward to decide the cache strategy is not appropriate. Therefore, we define the estimated reward by adding a perturbed term to the average reward similar to \cite{chen2013combinatorial}, which can achieve a good tradeoff between exploration and exploitation. The estimated reward $\widehat{R}_{m,f}^t(\textbf{b}_{m,f})$ is updated as:
\begin{align}
\widehat{R}_{m,f}^t(\textbf{b}_{m,f})=\overline{R}_{m,f}^{t-1}(\textbf{b}_{m,f})+B_{\text{a-coll},m}(\textbf{b}_{m,f})\sqrt{\frac{3\log(t)}{2T^{t-1}_{m,f}(\textbf{b}_{m,f})}}, \label{eqn:est_agent1}
\end{align}
where $B_{\text{a-coll},m}(\textbf{b}_{m,f})$ is the upper bound on the reward of SBS $m$ with joint action $\textbf{b}_{m,f}$ as:
\begin{align}
B_{\text{a-coll},m}(\textbf{b}_{m,f})&=a_{m,f}\sum_{u\in\mathcal{U}_m}(d_0-d_{m,u}) \nonumber\\
&~~~~~~~~~\sum_{j=1}^{|\mathcal{M}_u|}\mathbb{I}\{m=j_u\}\prod_{n=1}^{j-1}(1-a_{n_u,f}).
\end{align}

{Note that this perturbed term corresponds to the upper confidence bound (UCB) in MAB, which is utilized in the combinatorial MAB problem [30] with a single SBS. This UCB term can also be utilized in this MAMAB problem since the reward of each SBS for a fixed joint action has a stationary distribution in the stationary environment and the theoretical proof is shown later in Lemma $2$.}

With the estimated reward, we can optimize the cache strategy by maximizing the total estimated reward $\sum_{m=1}^M\sum_{f=1}^F\widehat{R}_{m,f}^t(\textbf{b}_{m,f})$ at each time slot $t$ with the cache size constraint. Note that this problem is NP-complete and the commonly used greedy algorithm has a high computational complexity\cite{shanmugam2013femtocaching}. Therefore, we propose a low-complexity coordinate ascent algorithm \cite{vlassis2004anytime} to obtain the solutions. Coordinate ascent algorithm is an anytime algorithm that is appropriate for real-time multi-agent systems where decision making must be done under time constraints. Given a random initial cache strategy, each SBS optimizes its own cache {and share its optimized cache strategy with its neighbor SBSs in the coordination graph} sequentially while the other $M-1$ SBSs stay the same. This step is repeated until no improvement can be made under the time constraints. The details of the coordinate ascent algorithm {are} given in Algorithm \ref{alg:9}.
\begin{algorithm}
\caption{Coordinate Ascent Algorithm} \label{alg:9}
\begin{algorithmic}[1]
\STATE Initialize the cache of SBSs randomly
\REPEAT
\FOR {$m=1,2,\ldots,M$}
\STATE Calculate the estimated reward of caching file $f$ at SBS $m$ for all files when the caching actions of the other $M-1$ SBSs stay the same
\STATE SBS $m$ optimizes its own caching action by maximizing the total estimated reward
\ENDFOR
\UNTIL {No SBS changes its cache strategy or decision time runs out}
\end{algorithmic}
\end{algorithm}

For the performance of Algorithm \ref{alg:9}, we have the following lemma.
\begin{lemma}
Algorithm \ref{alg:9} can achieve at least $\frac{1}{2}$ optimality of \textbf{P3}, i.e., $R_{\text{total}}(\textbf{A}^L)\geq\frac{1}{2}R_{\text{total}}(\textbf{A}^*)$, where $\textbf{A}^L$ is the cache strategy obtained by Algorithm \ref{alg:9}.
\end{lemma}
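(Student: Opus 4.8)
The plan is to recognize \textbf{P3} as the maximization of a monotone submodular set function subject to a partition--matroid constraint (the same structure as in \cite{shanmugam2013femtocaching}), and then to show that the configuration returned by Algorithm \ref{alg:9} is a \emph{local optimum} of that problem; the $\tfrac12$ bound is then the classical guarantee of local search for matroid--constrained monotone submodular maximization. Concretely, I would identify a cache strategy $\mathbf{A}$ with the set $\mathcal{A}=\{(m,f):a_{m,f}=1\}\subseteq\mathcal{M}\times\mathcal{F}$. Using (\ref{eqn:reward2}), for a fixed pair $(u,f)$ the sum $\sum_{m}r_{m,f}$ keeps only the term of the nearest neighbor SBS of $u$ that caches $f$, so
\[
R_{\text{total}}(\mathbf{A})=g(\mathcal{A}):=\sum_{u\in\mathcal{U}}\sum_{f\in\mathcal{F}}p_{u,f}\,h_{u,f}(\mathcal{A}),\qquad
h_{u,f}(\mathcal{A})=\max\{\,d_0-d_{m,u}:m\in\mathcal{M}_u,\,(m,f)\in\mathcal{A}\,\},
\]
with $\max\emptyset=0$, and the constraints (\ref{eqn:cons11})--(\ref{eqn:cons22}) say exactly that $\mathcal{A}$ is independent in the partition matroid on $\mathcal{M}\times\mathcal{F}$ whose blocks are $\{m\}\times\mathcal{F}$, each of capacity $S$.

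Next I would establish that $g$ is monotone non-decreasing and submodular with $g(\emptyset)=0$. Monotonicity has the direct reading that letting an extra SBS cache a file can only move a user to an equal-or-closer server, hence cannot decrease $d_0-d_{m,u}$. For submodularity, each $h_{u,f}$ is a ``best-of'' coverage function: the marginal value of adding an SBS $m$ (caching $f$) to a set is $\big(d_0-d_{m,u}-(\text{current best saving})\big)^+$, which is non-increasing in the current set; such functions are monotone submodular, and a non-negative combination (weights $p_{u,f}\ge0$) of monotone submodular functions is again monotone submodular.

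With this in place I would run the local-search argument. Without loss of generality each SBS caches exactly $S$ files in both $\mathbf{A}^L$ and $\mathbf{A}^*$ (filling unused slots arbitrarily does not decrease $g$ and preserves local optimality), so $L:=\mathcal{A}^L$ and $O:=\mathcal{A}^*$ are bases of the matroid. When Algorithm \ref{alg:9} stops, Step 5 guarantees that for each SBS $m$ the block $L\cap(\{m\}\times\mathcal{F})$ maximizes $R_{\text{total}}$ with all other SBSs fixed; in particular no single-file swap inside one block improves $g$. Since $L,O$ are bases of a partition matroid, there is a \emph{block-preserving} bijection $\pi:O\setminus L\to L\setminus O$ with $(L\setminus\{\pi(o)\})\cup\{o\}$ independent for every $o$, and local optimality gives $g(L)\ge g\big((L\setminus\{\pi(o)\})\cup\{o\}\big)$. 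Then, by monotonicity and submodularity,
\[
g(O)\le g(O\cup L)\le g(L)+\sum_{o\in O\setminus L}\big[g(L\cup\{o\})-g(L)\big],
\]
and splitting each term through $(L\setminus\{\pi(o)\})\cup\{o\}$ (second piece $\le0$ by local optimality, first piece bounded by the marginal of $\pi(o)$ to $L\setminus\{\pi(o)\}$ via submodularity) yields $g(L\cup\{o\})-g(L)\le g(L)-g(L\setminus\{\pi(o)\})$. Summing over the distinct elements $\pi(o)$ and using the standard submodular inequality $\sum_{e\in L\setminus O}[g(L)-g(L\setminus\{e\})]\le g(L)-g(L\cap O)\le g(L)$ gives $g(O)-g(L)\le g(L)$, i.e. $R_{\text{total}}(\mathbf{A}^L)=g(L)\ge\tfrac12 g(O)=\tfrac12 R_{\text{total}}(\mathbf{A}^*)$.

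The one genuinely non-mechanical step is the submodularity in the second paragraph: one has to see past the ``nearest caching SBS'' bookkeeping and the products $\prod_{n<j}(1-a_{n_u,f})$ and recognize the per-$(u,f)$ contribution as a $\max$-of-values coverage function. Once that is done, the matroid exchange property and the telescoping submodularity estimates are the textbook local-search argument, so the rest is routine.
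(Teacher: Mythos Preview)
Your argument is correct and rests on the same three pillars as the paper's proof---monotonicity, submodularity of $R_{\text{total}}$, and local optimality of $\mathbf{A}^L$---but the packaging differs. You cast the problem explicitly as monotone submodular maximization over a partition matroid and then invoke the textbook \emph{element-wise} local-search bound (single-swap stability plus the base-exchange bijection). The paper instead works at the \emph{block} (per-SBS) granularity: it writes the direct chain
\[
R_{\text{total}}(\mathbf{A}^*)-R_{\text{total}}(\mathbf{A}^L)\le \sum_{m}\big[R_{\text{total}}(\mathbf{a}_m^*\cup\mathbf{A}^L_{\mathcal{M}\setminus\{m\}})-R_{\text{total}}(\mathbf{A}^L_{\mathcal{M}\setminus\{m\}})\big]\le \sum_{m}\big[R_{\text{total}}(\mathbf{A}^L)-R_{\text{total}}(\mathbf{A}^L_{\mathcal{M}\setminus\{m\}})\big],
\]
using (i) monotonicity to pass to $\mathbf{A}^*\cup\mathbf{A}^L$, (ii) a block-subadditivity/diminishing-returns step, (iii) the stronger local optimality delivered by coordinate ascent (each $\mathbf{a}_m^L$ is optimal given the other $M-1$ blocks, not merely swap-stable), and (iv) a final diminishing-returns telescoping that collapses the sum to $R_{\text{total}}(\mathbf{A}^L)$. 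What your route buys is a clean reduction to a known theorem and a crisp identification of the per-$(u,f)$ term as a max-coverage function; what the paper's route buys is a self-contained five-line chain that avoids the matroid exchange bijection and uses directly the full-block optimality that Algorithm~\ref{alg:9} actually guarantees. Both implicitly assume the loop terminates at a fixed point (not the ``time runs out'' branch), which is what legitimizes step~(d) in the paper and your swap-stability premise alike.
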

\begin{proof}
See Appendix A.
\end{proof}

Note that SBS $m$ with joint action $\textbf{b}_{m,f}$ only can obtain the reward for file $f$ if $a_{m,f}=1$, the number of effective action space is $\sum_{m=1}^M2^{\Gamma(m)}F$, which grows exponentially with $\Gamma(m)$. Therefore, this agent-based collaborative MAMAB in a SBS perspective has a high computational complexity and a slow learning speed.
\subsubsection{User perspective}
{Note that from the SBS perspective}, the action space of $\textbf{b}_{m,f}$ is fixed. However, the reward distributions of SBS $m$ for some different $\textbf{b}_{m,f}$ are independent and identical and can be combined to reduce the action space. For example, if there is only one user $u$ covered by two SBSs, the reward distributions of SBS {$1_u$ with joint action $(a_{1_u,f}=1,a_{2_u,f}=1)$ and $(a_{1_u,f}=1,a_{2_u,f}=0)$ are identical since $u$ will always be served by SBS $1_u$ if $a_{1_u,f}=1$.} Therefore, we focus on the reward of SBSs {from the} user perspective by decomposing total reward function into a linear combination of local agent-dependent reward {functions} $r_{m,f}(\textbf{c}_{m,\mathcal{V}_m,f})$ as:
\begin{align}
R_{\text{total}}(\textbf{A})=\sum_{m=1}^M\sum_{f=1}^F\sum_{\mathcal{V}_m} r_{m,f}(\textbf{c}_{m,\mathcal{V}_m,f})
\end{align}
where $\textbf{c}_{m,\mathcal{V}_m,f}=[a_{m,f}=1,a_{n_1,f}=0,\ldots,a_{n_{|\mathcal{V}_m|,f}=0}]$ is the joint action that SBS $m$ caches file $f$ while SBSs in $\mathcal{V}_m$ {do not} and $\mathcal{V}_m\in\{\mathcal{V}^u_m\mid u\in\mathcal{U}_m\}$, where $\mathcal{V}^u_m\triangleq\{n\in\mathcal{M}_u\mid d_{n,u}<d_{m,u}\}$ is the set of SBSs that are closer to user $u$ than SBS $m$. The reason for only considering joint action $\textbf{c}_{m,\mathcal{V}_m,f}$ is that SBS $m$ can serve user $u$ if and only if it is the nearest SBS that caches the requested file of user $u$.
In the initial phase $t=0$, {SBSs cache files randomly to make each joint action $\textbf{c}_{m,\mathcal{V}_m,f}$ appear at least once similar to the agent-based collaborative MAMAB from the SBS perspective.} The number of times that $\textbf{c}_{m,\mathcal{V}_m,f}$ appears until $t$ is denoted as $T^t_{m,f}(\textbf{c}_{m,\mathcal{V}_m,f})$. Besides, if joint action $\textbf{c}_{m,\mathcal{V}_m,f}$ appears at time slot $t$, we update the average reward $\overline{R}_{m,\mathcal{V}_m,f}^t(\textbf{c}_{m,\mathcal{V}_m,f})$ as:
\begin{align}
\!\!\overline{R}_{m,\mathcal{V}_m,f}^t(\textbf{c}_{m,\mathcal{V}_m,f})&\!=\!\frac{\left[T^t_{m,\mathcal{V}_m,f}(\textbf{c}_{m,\mathcal{V}_m,f})\!-\!1\right]\!\overline{R}_{m,\mathcal{V}_m,f}^{t-1}(\textbf{c}_{m,\mathcal{V}_m,f})}{T^t_{m,\mathcal{V}_m,f}(\textbf{c}_{m,\mathcal{V}_m,f})} \nonumber\\
&~~~~~+\frac{\sum_{u\in \mathcal{U}_m}\mathbb{I}\{\mathcal{V}^{u}_m=\mathcal{V}_m\}r^{u,t}_{m,f}}{T^t_{m,\mathcal{V}_m,f}(\textbf{c}_{m,\mathcal{V}_m,f})},
\end{align}
where $r_{m,f}^{u,t}$ is the reward of SBS $m$ obtained by serving user $u$ with file $f$. {Note that each SBS $m$ needs to share its cache strategy with its neighbor $\Gamma(m)$ in the coordination graph in order to update $T^t_{m,f}(\textbf{c}_{m,\mathcal{V}_m,f})$ and average reward $\overline{R}_{m,\mathcal{V}_m,f}^t(\textbf{c}_{m,\mathcal{V}_m,f})$.} By adding a perturbed term similar to (\ref{eqn:est_agent1}), the estimated reward with joint action $\textbf{c}_{m,\mathcal{V}_m,f}$ at time slot t is given by:
\begin{align}
&\widehat{R}_{m,\mathcal{V}_m,f}^t(\textbf{c}_{m,\mathcal{V}_m,f})=\overline{R}_{m,\mathcal{V}_m,f}^{t-1}(\textbf{c}_{m,\mathcal{V}_m,f})  \nonumber\\
&~~~~+B_{\text{a-coll},m,\mathcal{V}_m}(\textbf{c}_{m,\mathcal{V}_m,f})\sqrt{\frac{3\log(t)}{2T^{t-1}_{m,\mathcal{V}_m,f}(\textbf{c}_{m,\mathcal{V}_m,f})}},
\end{align}
where $B_{\text{a-coll},m,\mathcal{V}_m}(\textbf{c}_{m,\mathcal{V}_m,f})$ is the upper bound on the reward of SBS $m$ with $\textbf{c}_{m,\mathcal{V}_m,f}$ as:
\begin{align}
B_{\text{a-coll},m,\mathcal{V}_m}(\textbf{c}_{m,\mathcal{V}_m,f})&=a_{m,f}\sum_{u\in\mathcal{U}_m}(d_0-d_{m,u}) \nonumber\\
&\mathbb{I}\{\mathcal{V}^u_m=\mathcal{V}_m\}\prod_{n\in\mathcal{V}_m}(1-a_{n,f}).
\end{align}

SBSs optimize the cache strategies to maximize the total estimated reward $\sum_{f=1}^F\sum_{m=1}^M\sum_{\mathcal{V}_m}\widehat{R}_{m,\mathcal{V}_m,f}^t(\textbf{c}_{m,\mathcal{V}_m,f})$ at each time slot $t$ by utilizing Algorithm \ref{alg:9}. Note that the effective action space of $\textbf{c}_{m,\mathcal{V}_m,f}$ depends on the number of users and is $\sum_{m=1}^M2^{\Gamma(m)}F$ in the worst case. {Therefore, we conclude that commonly used agent-based collaborative MAMAB from the SBS perspective can be seen as the worst case of this new agent-based collaborative MAMAB algorithm from the user perspective. By analyzing the property of the caching problem, our proposed algorithm from the user perspective decreases the action space greatly compared to that from the SBS perspective when the user number is small by combining some joint actions together.} However, as the user number keeps growing, the action space can be still very large as it grows exponentially with $\Gamma(m)$, which causes a high computational complexity and a slower learning speed.

\begin{lemma}
The $(\alpha,\beta)$-approximation regret of these two agent-based collaborative MAMAB algorithms using an $(\alpha,\beta)$-approximation is on the order of $\mathcal{O}(\log{(T_{\text{total}})})$.
\end{lemma}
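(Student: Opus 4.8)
The plan is to recognize both agent-based collaborative MAMAB algorithms as instances of the combinatorial multi-armed bandit (CMAB) framework of \cite{chen2013combinatorial} and then invoke its $(\alpha,\beta)$-approximation regret bound. For the SBS-perspective algorithm I would take as a \emph{base arm} every triple $(m,f,\mathbf{b}_{m,f})$, with unknown mean equal to the expected reward $r_{m,f}(\mathbf{b}_{m,f})$ in (\ref{eqn:reward2}); a \emph{super arm} is a feasible cache matrix $\mathbf{A}$, which for each pair $(m,f)$ deterministically selects exactly one joint action $\mathbf{b}_{m,f}$ (the restriction of $\mathbf{A}$ to $\{m\}\cup\Gamma(m)$), and playing $\mathbf{A}$ reveals the realized rewards $r_{m,f}^t(\mathbf{b}_{m,f})$ of all $MF$ selected base arms (semi-bandit feedback; the map from super arm to the set of observed base arms is deterministic). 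The super-arm reward $R_{\text{total}}(\mathbf{A})=\sum_m\sum_f r_{m,f}(\mathbf{b}_{m,f})$ is then exactly the CMAB reward. The user-perspective algorithm carries over in the same way, with base arms indexed by $(m,f,\mathcal{V}_m)$ and the finer decomposition $R_{\text{total}}(\mathbf{A})=\sum_m\sum_f\sum_{\mathcal{V}_m} r_{m,f}(\mathbf{c}_{m,\mathcal{V}_m,f})$; only the number of base arms (at most $\sum_m 2^{\Gamma(m)}F$) changes.

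Next I would verify the two structural conditions required by the CMAB regret theorem. \emph{Monotonicity} is immediate: $R_{\text{total}}$ is a sum of the base-arm means with nonnegative coefficients, hence non-decreasing in each mean. For \emph{bounded smoothness}, normalize each realized reward by its upper bound $B_{\text{a-coll},m}(\mathbf{b}_{m,f})$ so that base-arm outcomes lie in $[0,1]$ --- this is exactly why that constant multiplies the confidence radius in (\ref{eqn:est_agent1}). The reward is then a weighted-linear function of the normalized means, so $|R_{\text{total}}(\mathbf{A};\boldsymbol{\mu})-R_{\text{total}}(\mathbf{A};\boldsymbol{\mu}')|\le \Lambda\,\|\boldsymbol{\mu}-\boldsymbol{\mu}'\|_\infty$ for a finite constant $\Lambda$ (at most $\max_{\mathbf{A}}\sum_m\sum_f B_{\text{a-coll},m}(\mathbf{b}_{m,f})$), i.e., bounded smoothness holds with the strictly increasing $f(x)=\Lambda x$. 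Finally, Algorithm~\ref{alg:9} plays the role of the computation oracle, and Lemma~1 certifies that it is an $(\alpha,\beta)$-approximation oracle with $\alpha=\tfrac12$, where $\beta$ is the probability it returns such a solution within the decision-time budget ($\beta=1$ if run to convergence).

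With the model in place, I would apply the CUCB regret bound of \cite{chen2013combinatorial}. The estimator (\ref{eqn:est_agent1}) is precisely the (rescaled) CUCB index --- sample mean plus the $\sqrt{3\log t/(2T^{t-1})}$ radius --- so a standard Chernoff--Hoeffding concentration argument bounds, for each base arm, the number of rounds in which an under-sampled arm can make the chosen super arm appear $\alpha$-good while it is not. Summing over base arms yields an $(\alpha,\beta)$-approximation regret of the form $O\!\Big(\sum_i \tfrac{\Lambda^2\log T_{\text{total}}}{(\Delta_{\min,i})^2}\Big)+O\big(N\,\Delta_{\max}\big)$, where $N$ is the number of base arms, $\Delta_{\min,i}$ and $\Delta_{\max}$ are the usual per-arm reward gaps, and the sum is over base arms with positive gap; this is $\mathcal{O}(\log T_{\text{total}})$.

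The main obstacle I anticipate is justifying that the reduction to the CMAB model is legitimate rather than the regret algebra. Two points need care. First, conditioned on the played cache matrix, the realized base-arm rewards must be independent across rounds and drawn from a fixed distribution: this uses the stationary-environment assumptions (time-invariant $p_{u,f}$, fixed user locations and delays $d_{m,u}$, i.i.d.\ per-slot requests) together with the fact that, for a \emph{fixed} joint action, the collision pattern is deterministic, so $r_{m,f}^t(\mathbf{b}_{m,f})$ is genuinely a sum of independent request indicators weighted by the fixed quantities $d_0-d_{m,u}$. Second, the bookkeeping must be arranged so that the counters $T_{m,f}^t(\mathbf{b}_{m,f})$ (resp.\ $T^t_{m,\mathcal{V}_m,f}(\mathbf{c}_{m,\mathcal{V}_m,f})$) increment exactly when the corresponding base arm belongs to the played super arm; here one must use that, for each $(m,f)$, the possible joint actions are mutually exclusive within a round --- which is why $\sum_m 2^{\Gamma(m)}F$, not a product over SBSs, is the relevant arm count --- and that each SBS can observe its own realized reward, including a zero reward when a closer neighbor serves the user first, so the semi-bandit feedback is well defined. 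Once these points are settled, the statement follows directly from the cited theorem.
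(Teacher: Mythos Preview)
Your proposal is correct and follows essentially the same approach as the paper: both reduce the agent-based MAMAB to the combinatorial MAB framework of \cite{chen2013combinatorial}, verify monotonicity and bounded smoothness for the linear super-arm reward, use Lemma~1 to certify Algorithm~\ref{alg:9} as an $(\alpha,\beta)$-approximation oracle, and then obtain the $\mathcal{O}(\log T_{\text{total}})$ regret via the Chernoff--Hoeffding/CUCB argument. The only cosmetic difference is that the paper writes out the counter-and-concentration steps of the CMAB proof explicitly in its own notation rather than invoking the theorem as a black box.
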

\begin{proof}
See Appendix B.
\end{proof}

{This regret bound guarantees that the algorithm has an asymptotically optimal performance since $\lim_{T_{\text{total}}\rightarrow \infty}\frac{\log {(T_{\text{total}}})}{T_{\text{total}}}=0$. Therefore, we conclude that these two agent-based collaborative MAMAB algorithms converge to the optimal offline cache strategy in the stationary environment.}
\subsection{Distributed MAMAB}
To avoid the curse of dimensionality, we try to solve $\textbf{P3}$ in a distributed manner. In this case, each SBS is regarded as an independent learner and it learns its own cache strategy independently without information exchange with other SBSs. Therefore, the total reward function is decomposed into a linear combination of agent-independent reward {functions} $r_{m,f}(a_{m,f})$ as:
\begin{align}
R_{\text{total}}(\textbf{A})=\sum_{m=1}^M\sum_{f=1}^Fr_{m,f}(a_{m,f}),  \label{eqn:dis_reward}
\end{align}
which means each SBS totally ignores the reward and action of other SBSs. {Note that (\ref{eqn:dis_reward}) is not equivalent to (\ref{eqn:reward}). It is just a function decomposition method, which is called the independent learner approach \cite{claus1998dynamics,kok2006collaborative} and is commonly used in reinforcement learning. In independent learner approach, the agents ignore the actions and rewards of the other agents, and learn their strategies independently.}
In the initial phase $t=0$, each SBS independently caches all files once. The number of times that $a_{m,f}$ appears until $t$ is denoted as $T^t_{m,f}(a_{m,f})$. Besides, if $a_{m,f}$ appears at time slot $t$, the average reward $\overline{R}_{m,f}^t(a_{m,f})$ of SBS $m$ with caching action $a_{m,f}$ is updated as:
\begin{align}
\overline{R}_{m,f}^t(a_{m,f})=\frac{\left[T^t_{m,f}(a_{m,f})-1\right]\overline{R}^{t-1}_{m,f}(a_{m,f})+r^t_{m,f}(a_{m,f})}{T^t_{m,f}(a_{m,f})}, \label{eqn:ave_dis}
\end{align}
where $r^t_{m,f}(a_{m,f})$ is the observed reward of SBS $m$ with action $a_{m,f}$ at time slot $t$. By adding a perturbed term similar to (\ref{eqn:est_agent1}), the estimated reward with $a_{m,f}$ at time slot $t$ is given by:
\begin{align}
\widehat{R}_{m,f}^t(a_{m,f})=\overline{R}_{m,f}^{t-1}(a_{m,f})+B_{d,m}(a_{m,f})\sqrt{\frac{3\log(t)}{2T^{t-1}_{m,f}(a_{m,f})}}. \label{eqn:est_dis}
\end{align}
where $B_{d,m}(a_{m,f})$ is the upper bound on the reward of SBS $m$ with action $a_{m,f}$ and is given by:
\begin{align}
B_{d,m}(a_{m,f})=a_{m,f}\sum_{u\in\mathcal{U}_m}(d_0-d_{m,u})
\end{align}


Each SBS optimizes its cache strategy by maximizing its own estimated reward $\sum_{f=1}^F\widehat{R}^t_{m,f}(a_{m,f})$ at each time slot $t$ independently. In this case, the effective action space is $MF$ and has a linear growth speed of the number of SBSs, which has a low computational complexity and a fast learning speed. However, the performance is poor since there is no coordination among SBSs.

\subsection{Edge-based Collaborative MAMAB}
Based on the analysis of the above three MAMAB approaches, it is clear that a good solution should consider both SBS coordination and computational complexity simultaneously. Therefore, we propose a coordination graph edge-based reward assignment method by assigning the reward of SBSs to the edges in the coordination graph $(V,E)$. In this case, we decompose the total reward function into a linear combination of edge-independent reward {functions} as:
\begin{align}
R_{\text{total}}(\textbf{A})=\sum_{f=1}^F\sum_{(m,n)\in E} r_{m,n,f}(a_{m,f},a_{n,f}). \label{eqn:reward3}
\end{align}
where $r_{m,n,f}(a_{m,f},a_{n,f})$ is the reward of the edge $(m,n)$ with joint action $(a_{m,f},a_{n,f})$. {Note that (\ref{eqn:reward3}) and (\ref{eqn:reward}) are not equivalent. It is just a reward assignment method by dividing the reward of agents to the edges in the coordination graph, which is utilized in reinforcement learning to reduce the complexity \cite{kok2006collaborative}. The main difficulty is how to design the specific coordination graph edge-based reward assignment method based on the problem property.}

The coordination graph edge-based reward assignment method runs as follows. For a user $u$ requesting file $f$ at time slot $t$, we assume that it is served by its neighbor SBS $m$ and SBS $m$ can obtain a reward $r_{m,f}^{u,t}$. Note that SBS $m$ knows the locations of other $M-1$ SBSs and its serving user $u$. If SBS $m$ is the nearest SBS of $u$, then the reward $r_{m,f}^{u,t}$ obtained by serving $u$ will be entirely assigned to its self edge $(m,m)$. This is because $u$ will always be served by $m$ in this case, regardless of whether other SBSs cache file $f$ or not. When SBS $m$ is not the nearest SBS of $u$, it serves $u$ if and only if the nearer neighbor SBSs of $u$ do not cache the requested file $f$. In this case, the reward $r_{m,f}^{u,t}$ is divided equally among the edges between the serving SBS $m$ and all the nearer neighbor SBSs $n\in\mathcal{V}^u_m$ with joint action $(a_{m,f}=1,a_{n,f}=0)$ since the reward $r_{m,f}^{u,t}$ depends on cache actions of SBS $m$ and $n\in\mathcal{V}^u_m$. These split rewards on edges $(m,n)$ are stored at the serving SBS $m$ since only SBS $m$ caches the requested file $f$ and obtains the reward while other nearer SBSs $n\in\mathcal{V}^u_m$ not.

{SBSs cache files randomly to make each joint action $(a_{m,f},a_{n,f})$ for $(m,n)\in E$ appear at least once in the initial phase $t=0$.\footnote{Note that the initial phase in the edge-based collaborative MAMAB algorithm is much shorter than the agent-based ones due to its smaller action space.}} At each time slot $t$, SBSs update the number of times that joint action $(a_{m,f},a_{n,f})$ appears until $t$, which is denoted as $T^t_{m,n,f}(a_{m,f},a_{n,f})$. If $(a_{m,f},a_{n,f})$ appears at time slot $t$, the average reward $\overline{R}_{m,n,f}^t(a_{m,f},a_{n,f})$ of the edge $(m,n)\in E$ with action pair $(a_{m,f},a_{n,f})$ is updated as:
\begin{align}
&\overline{R}_{m,n,f}^t(a_{m,f},a_{n,f})=\nonumber\\
&\frac{\left[T^t_{m,n,f}(a_{m,f},a_{n,f})-1\right]\overline{R}_{m,n,f}^{t-1}(a_{m,f},a_{n,f})+r_{m,n,f}^t(a_{m,f},a_{n,f})}{T^t_{m,n,f}(a_{m,f},a_{n,f})},  \label{eqn:ave_col}
\end{align}
where $r_{m,n,f}^t(a_{m,f},a_{n,f})$ is the reward assigned to the edge $(m,n)$ with $(a_{m,f},a_{n,f})$ at time slot $t$. {Note that each SBS $m$ needs to share its cache strategy with its neighbor $\Gamma(m)$ in the coordination graph in order to update $T^t_{m,n,f}(a_{m,f},a_{n,f})$ and average reward $\overline{R}_{m,n,f}^t(a_{m,f},a_{n,f})$. Moreover, each SBS $m$ only needs to store the reward of edge $(m,n)$ with joint action $(a_{m,f}=1,a_{n,f}=0)$ for $n\in\Gamma(m)$ and its self edge $(m,m)$ with action $(a_{m,f}=1,a_{m,f}=1)$ due to the coordination graph edge-based reward assignment method.} By adding a perturbed term similarly, the estimated reward $\widehat{R}_{m,n,f}^t(a_{m,f},a_{n,f})$ of the edge $(m,n)$ with joint action $(a_{m,f},a_{n,f})$ at time slot $t$ is given by:
\begin{align}
\!\!\!\!&\widehat{R}_{m,n,f}^t(a_{m,f},a_{n,f})=\overline{R}_{m,n,f}^{t-1}(a_{m,f},a_{n,f}) \nonumber\\
&~~~~~~~~+B_{\text{coll},m,n}(a_{m,f},a_{n,f})\sqrt{\frac{3\log(t)}{2T^{t-1}_{m,n,f}(a_{m,f},a_{n,f})}}, \label{eqn:est_col}
\end{align}
where $B_{\text{coll},m,n}(a_{m,f},a_{n,f})$ is the upper bound of the reward on the edge $(m,n)\in E$ with joint action $(a_{m,f},a_{n,f})$ and it is given by:
\begin{align}
&B_{\text{coll},m,n}(a_{m,f},a_{n,f})= \nonumber\\
&\begin{cases}
a_{m,f}\sum_{u\in\mathcal{U}_m}\mathbb{I}\{m=1_{u}\}(d_0-d_{m,u}),  ~~~~~~~~\text{if}~ m=n   \\
a_{m,f}(1-a_{n,f})\sum_{u\in\mathcal{U}_m}\mathbb{I}\{n\in\mathcal{V}_{m}^u \}\sum_{j=2}^{|\mathcal{M}_u|} \\
\mathbb{I}\{m=j_u\}\frac{d_0-d_{m,u}}{j-1}+a_{n,f}(1-a_{m,f})\sum_{u\in\mathcal{U}_n}  \\
\mathbb{I}\{m\in\mathcal{V}_{n}^u \}\sum_{j=2}^{|\mathcal{M}_u|}\mathbb{I}\{n=j_u\}\frac{d_0-d_{n,u}}{j-1},  ~~~~~~~\text{if} ~m\neq n
\end{cases}.
\end{align}

\begin{algorithm}
\caption{Edge-based Collaborative MAMAB Caching}  \label{alg:3}
\begin{algorithmic}[1]
\STATE Cache files at SBSs to make all joint actions $(a_{m,f},a_{n,f})$ appear once for $(m,n)\in E$, i.e., $T^0_{m,n,f}(a_{m,f},a_{n,f})=1$ and then update the average reward $\overline{R}^0_{m,n,f}(a_{m,f},a_{n,f})$
\FOR {$t=1,2,\ldots,T_{\text{total}}$}
\STATE Each SBS decides its cache strategy $\{\widetilde{a}_{m,f}\}$ with the distributed MAMAB algorithm
\STATE Update the estimated reward according to (\ref{eqn:est_col}) for all joint actions $(a_{m,f},a_{n,f})$
\STATE Decide the cache strategy by maximizing the total estimated reward $\sum_{f=1}^F\sum_{(m,n)\in E}$ $\widehat{R}_{m,n,f}^t(a_{m,f},a_{n,f})$ according to the Algorithm \ref{alg:9} with the initial cache strategy $\{\widetilde{a}_{m,f}^t\}$
\STATE Initialize the reward $r^t_{m,n,f}(a_{m,f},a_{n,f})=0$ for all edges $(m,n)\in E$ and joint actions
\STATE Count $T^t_{m,n,f}(a_{m,f},a_{n,f})$ for all joint actions
\FOR {$u=1,2,\ldots,U$}
\STATE Observe the reward $r_{m,f}^{u,t}$ of the SBS $m$ that serves user $u$ with file $f$
\IF{SBS $m$ is the nearest SBS of user $u$}
\STATE  Update $r_{m,m,f}^t(a_{m,f},a_{m,f})=r_{m,m,f}^t(a_{m,f},a_{m,f})+r_{m,f}^{u,t}$
\ELSE
\STATE Update $r_{m,n,f}^t(a_{m,f},a_{n,f})=r_{m,n,f}^t(a_{m,f},a_{n,f})+\frac{r_{m,f}^{u,t}}{j-1}$ for the edges $(m,n)$ between $j-1$ nearer SBSs and $j$-th nearest SBS $m$ of user $u$
\ENDIF
\ENDFOR
\STATE Update the average reward according to (\ref{eqn:ave_col}) if joint action $(a_{m,f},a_{n,f})$ appears and $\overline{R}_{m,n,f}^t(a_{m,f},a_{n,f})=\overline{R}_{m,n,f}^{t-1}(a_{m,f},a_{n,f})$ otherwise.
\ENDFOR
\end{algorithmic}
\end{algorithm}

SBSs optimize the cache strategies to maximize the total estimated reward $\sum_{f=1}^F\sum_{(m,n)\in E}$ $\widehat{R}_{m,n,f}^t(a_{m,f},a_{n,f})$ at each time slot $t$ by utilizing Algorithm \ref{alg:9}. Specifically, we choose cache strategy obtained from the distributed MAMAB algorithm as the initial cache strategy in Algorithm \ref{alg:9} to have a better performance and a faster convergence speed.
Note that the edge-based collaborative MAMAB algorithm not only considers the SBS coordination, but also reduces the computational complexity greatly {with} effective action space $\sum_{m=1}^M\Gamma(m)+M$, which grows linearly with the number of edges. The details of this algorithm {are} given in Algorithm \ref{alg:3}.

%
%


\subsection{Tradeoff between Exploration and Exploitation}
As we have discussed, we add a perturbed term to the average reward to achieve the tradeoff between exploration and exploitation. However, there also {exists} some simple methods that can achieve a good tradeoff in MAMAB problems. One such simple and widely used algorithm is the $\epsilon$-greedy algorithm, which caches files to maximize the total average reward with probability $1-\epsilon$ and caches files randomly with probability $\epsilon$. $\epsilon$-greedy algorithm can also be divided into different manners similarly with different reward decomposition methods.

We also propose a new perturbed term to update the estimated reward by exploiting the particular structure of the problem similar to \cite{blasco2014learning} and the new perturbed term is $\sqrt{\frac{3\log(B^2t)}{2T^{t-1}}}$ rather than $B\sqrt{\frac{3\log(t)}{2T^{t-1}}}$ if the reward upper bound $B>0$. Otherwise, the perturbed term is $0$. The MAMAB algorithms with new perturbed terms are called MAMAB-v2 algorithms.

\section{Cache Strategy in Non-Stationary Environment}
In this section, we aim to solve $\textbf{P2}$ in the non-stationary environment. We first give a brief introduction of the non-stationary environment. {Then we modify our proposed MAMAB algorithms to make them adapt to the non-stationary environment. Finally, we discuss the tradeoff between exploration and exploitation.}
\subsection{Non-stationary Environment}
The non-stationary environment is described as follows. The file library $\mathcal{F}$ is a dynamic set with new files entering and old files leaving over time. Besides, user preference changes irregularly over time and the number of requests for each user at a time slot is random. {Note that users can keep silent and request no file at some time slots, the set of active users that request files also changes over time. Moreover, we take user mobility into account that user locations can change at different time slots. Thus, transmission delay $d_{m,u}$ also varies over time and cannot be known in advance. This non-stationary environment is more realistic since the real world scenario is actually non-stationary. However, this non-stationary environment also brings some challenges since it is time-varying and more stochastic. Therefore, we need to modify the algorithms proposed in the stationary environment to adapt to this non-stationary environment.\footnote{Nota that we only modify the distributed and edge-based collaborative MAMAB in this work since the action space of the agent-based collaborative MAMAB is too large, which is almost impractical to be utilized in the non-stationary environment.}}

\subsection{Modified MAMAB}
{Note that the proposed MAMAB algorithms in stationary environment have some prior information of the environment, such as static file set. However, SBSs have no knowledge of these information in the non-stationary environment. Therefore, we need to define an active file set $\mathcal{F}_{\text{active}}^t$ at each time slot $t$ due to the fact that the file library $\mathcal{F}$ is dynamic over time. $\mathcal{F}_{\text{active}}^t$ is the set of files that are requested from time slot $1$ to $t$ and needs to be updated at the end of each time slot $t$ based on the receiving requests. Besides, we assign a large enough initial value to all joint actions of new added files in $\mathcal{F}_{\text{active}}^t$, rather than caching files randomly, to make each new joint action appear at least once. This new initialization method can reduce the explorations in the initial phase greatly and improve the performance compared to the initial phase in the stationary environment. Moreover, we need to design new UCB terms accordingly since SBSs have no information of the reward upper bound $B$.}

\subsubsection{Modified Distributed MAMAB}
{From the distributed MAMAB algorithm in the stationary environment, it is observed that the average reward $\overline{R}_{m,f}^t(a_{m,f})=0$ if $a_{m,f}=0$. Therefore, each SBS $m$ only initializes the estimated reward $\widehat{R}_{m,f}^t(a_{m,f})=H$ when $a_{m,f}=1$ for all active files $f\in\mathcal{F}_{\text{active}}^0$, where $H$ is a large enough number in order to make each active file can be cached at SBSs at least once. At each time slot $t$, each SBS $m$ updates the estimated reward for all files that have been cached at least once as:
\begin{align}
\widehat{R}_{m,f}^t(a_{m,f})=\overline{R}_{m,f}^{t-1}(a_{m,f})+\sqrt{\frac{3\log\left((B^t_{d,m})^2t\right)}{2T^{t-1}_{m,f}(a_{m,f})}},
\end{align}
where $B^t_{d,m}=\underset{f\in\mathcal{F}_{\text{active}}^t}{\text{max}}\overline{R}_{m,f}^{t-1}(a_{m,f})$. Comparing with the distributed MAMAB in the stationary environment, we utilize the maximal average reward to represent the upper bound of the reward due to the upper bound of the reward cannot be obtained and changes over time in the non-stationary environment. Then each SBS decides its cache strategy by maximizing its own estimated reward. After caching action has been done, each SBS $m$ updates the number of times that $a_{m,f}$ appears $T^t_{m,f}(a_{m,f})$ and the average reward according to (\ref{eqn:ave_dis}) based on the observed reward similar to the distributed MAMAB in the stationary environment. Finally, SBSs update the active file set $\mathcal{F}_{\text{active}}^t$ at the end of time slot $t$ based on their receiving requests and initialize the estimated reward $\widehat{R}_{m,f}^t(a_{m,f})=H$ when $a_{m,f}=1$ for new added files.}

\subsubsection{Modified Edge-based Collaborative MAMAB}
{
From Algorithm \ref{alg:3}, it is observed that the average reward $\overline{R}_{m,n,f}^t(a_{m,f},a_{n,f})=0$ if action pair $(a_{m,f},a_{n,f})=(1,1)$ when $m\neq n$ or $(a_{m,f},a_{n,f})=(0,0)$. Therefore, we only need to focus on the edge $(m,n)\in E$ when only one of them caches the file and the other does not, i.e., $(a_{m,f},a_{n,f})=(1,0)$, $(a_{m,f},a_{n,f})=(0,1)$ and $(a_{m,f},a_{m,f})=(1,1)$, which are called effective joint actions. We initialize the estimated reward of all effective joint actions $\widehat{R}_{m,n,f}^t(a_{m,f},a_{n,f})$ with a large enough number $H$. At time slot $t$, SBSs update the estimated reward of all effective joint actions $(a_{m,f},a_{n,f})$ that appear at least once as:
\begin{align}
\widehat{R}_{m,n,f}^t(a_{m,f},a_{n,f})&=\overline{R}_{m,n,f}^{t-1}(a_{m,f},a_{n,f}) \nonumber\\
&~~~~~+\sqrt{\frac{3\log((B^t_{\text{coll},m,n})^2t)}{2T^{t-1}_{m,n,f}(a_{m,f},a_{n,f})}},
\end{align}
where $B^t_{\text{coll},m,n}=\underset{f\in\mathcal{F}_{\text{active}}^t}{\text{max}}\overline{R}_{m,n,f}^{t-1}(a_{m,f},a_{n,f})$ is seen as the upper bound of the reward on the edge $(m,n)\in E$ with joint action $(a_{m,f},a_{n,f})$ for all active files $f$ at time slot $t$. Then SBSs decide their cache strategies collaboratively to maximize the total estimated reward according to Algorithm \ref{alg:9}. SBSs update $T^t_{m,n,f}(a_{m,f},a_{n,f})$ and average reward $\overline{R}_{m,n,f}^t(a_{m,f},a_{n,f})$ according to (\ref{eqn:ave_col}) if joint action $(a_{m,f},a_{n,f})$ appears based on the observed reward that follow the coordination graph edge-based reward assignment. Finally, SBSs update the active file set $\mathcal{F}_{\text{active}}^t$ at the end of time slot $t$ based on their receiving requests and initialize the estimated reward $\widehat{R}_{m,n,f}^t(a_{m,f},a_{n,f})=H$ for all effective joint actions of new added files.}

{
\subsection{Tradeoff between Exploration and Exploitation}
As we have discussed, by adding the perturbed term to the average reward as the estimated reward is a good way to achieve the tradeoff between exploration and exploitation in the stationary environment. However, our proposed perturbed terms do not have theoretical performance guarantee in the non-stationary environment. Therefor, we can also utilize the $\epsilon$-greedy algorithm to achieve the tradeoff between exploration and exploitation, which is simple but effective in many scenarios. In the $\epsilon$-greedy algorithm, SBSs cache files to maximize the total average reward with probability $1-\epsilon$ and caches files randomly with probability $\epsilon$. Note that $\epsilon$-greedy algorithm can also be divided into both distributed and collaborative manners similarly with different average reward update manners.
}
\section{Simulation Results}
In this section, we demonstrate the performance of {our} proposed algorithms in both stationary environment and non-stationary environment. Simulations are performed in a square area of $100\times100$ $\text{m}^2$. Both SBSs and users are uniformly distributed in this plane. Unless otherwise stated, the system parameters are set as follows: bandwidth $W=10~\text{MHz}$, transmission power of SBSs $P=1$ $\text{W}$, Gaussian white noise power $\sigma_N^2=1$ $\text{W}$, path loss exponent $\alpha=4$ and the core network transmission delay {$d_0=3\underset{m\in\mathcal{M},u\in\mathcal{U}}{\text{max}}~\frac{1}{W\log_2(1+\text{SNR}_{m,u})}$}.

\subsection{Stationary Environment}
In the stationary environment, we assume that users request for files according to independent Zipf distributions. Specifically, the probability of user $u$ requesting file $f$ is denoted as $p_{u,f}=\frac{1/f_u^{\delta_u}}{\sum_{j=1}^F1/j^{\delta_u}}$, where $\delta_u\in \{0.5,0.7,0.9,1.1,1.3\}$ is the Zipf parameter of user $u$ and $f_u$ means that file $f$ is the $f_u$-th most interested file of $u$. Unless otherwise stated, the simulation parameters are set as follows: the number of files $F=100$, cache size $S=10$ and time horizon $T_{\text{total}}=25000$. We utilize the average transmission delay as the performance metric, which is defined as the ratio of the total transmission delay to the the number of time slots.

\begin{figure*}[t]
\vspace{-0.3cm}
\begin{subfigure}[t]{0.5\linewidth}
\centering
\includegraphics[width=8.5cm]{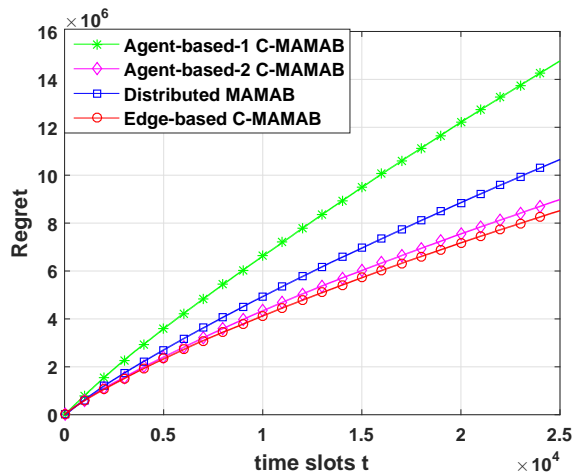}\\
\caption{{Regret comparison of MAMAB algorithms.}} \label{fig:M=6_v1}
\end{subfigure}
\hfill
\begin{subfigure}[t]{0.5\linewidth}
\centering
\includegraphics[width=8.5cm]{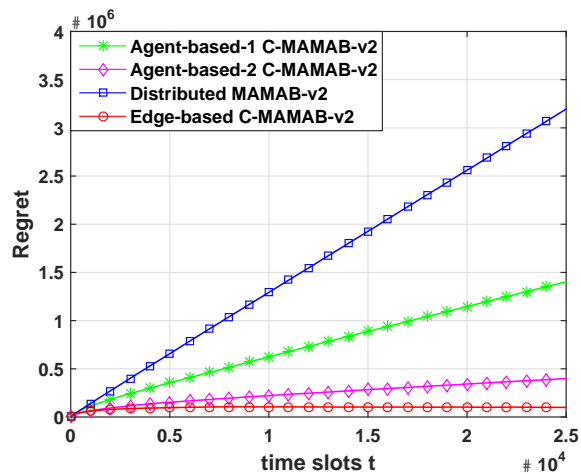}\\
\caption{{Regret comparison of MAMAB-V2 algorithms.}} \label{fig:M=6_v3}
\end{subfigure}
\caption{{The time evolution of regret when users have the same preference with $M=6$, $U=50$, $l_c=50$ \text{m}, $F=100$, $S=10$ and $\delta=0.9$.}}
\vspace{-0.4cm}
\end{figure*}

To evaluate the performance of the algorithms proposed in this paper, we compare them with the following benchmark algorithms:
%
%

$\bullet$ \textbf{Oracle Coordinate Ascent Algorithm}: Cache files according to Algorithm \ref{alg:9} with $300$ random initial strategy and choose the best one when user preference is known.

$\bullet$ \textbf{Oracle Greedy Algorithm \cite{shanmugam2013femtocaching}}: SBSs Cache files greedily when user preference is known.

$\bullet$ \textbf{CUCB Algorithm \cite{Sengupta2014}}: Each SBS estimates its local file popularity by using CUCB algorithm and then the cache strategy is optimized with the estimated local popularity coordinately.

$\bullet$ \textbf{Least Frequently Used (LFU) Algorithm \cite{970573}}: Replace the file with the minimum requested times in the cache of each SBS with the requested file that is not available.

$\bullet$ \textbf{Least Recently Used (LRU) Algorithm \cite{970573}}: Replace the least recently used file in the cache of each SBS with the requested file that is not available.

In this subsection, we first show the regret of MAMAB and MAMAB-v2 algorithms in both distributed and collaborative manners. Then, we evaluate the performance of our proposed algorithms by comparing them with benchmark algorithms. Finally, we demonstrate the performances of our proposed algorithms with respect to different communication distance and cache size.


For presentation convenience, the agent-based collaborative MAMAB in SBS perspective and user perspective are denoted as ``Agent-based-1 C-MAMAB'' and ``Agent-based-2 C-MAMAB'', respectively. The edge-based collaborative MAMAB is denoted as ``Edge-based C-MAMAB''. All the plots are obtained after averaging over $30$ independent realizations.

Time evolutions for the regret of MAMAB and MAMAB-v2 algorithms in different manners are plotted in Fig. \ref{fig:M=6_v1} and Fig. \ref{fig:M=6_v3}, respectively. The regret is defined as the accumulated difference between the oracle coordinate ascent algorithm and the corresponding MAMAB algorithm. It is observed that Agent-based-1 C-MAMAB performs worst in Fig. \ref{fig:M=6_v1} due to its large action space, which causes slow convergence speed. The Edge-based C-MAMAB outperforms three other MAMAB algorithms since it considers both SBS coordination and dimensions of action space, though it has no performance guarantee. In both MAMAB and MAMAB-v2 algorithms, Edge-based ones perform best and Agent-based-2 outperforms Agent-based-1, which means the large action space degrades the performance greatly. Comparing MAMAB and MAMAB-v2 algorithms, we find that MAMAB-v2 outperforms MAMAB notably by designing a new perturbed term, which means that they can achieve a better tradeoff between exploration and exploitation\footnote{MAMAB-v2 algorithms also outperform the $\epsilon$-greedy and these results are not shown is this paper due to the page limit.}. Therefore, we only focus on MAMAB-v2 algorithms in the following parts.
\begin{figure*}[t]
\vspace{-0.3cm}
\begin{subfigure}[t]{0.5\linewidth}
\centering
\includegraphics[width=8.5cm]{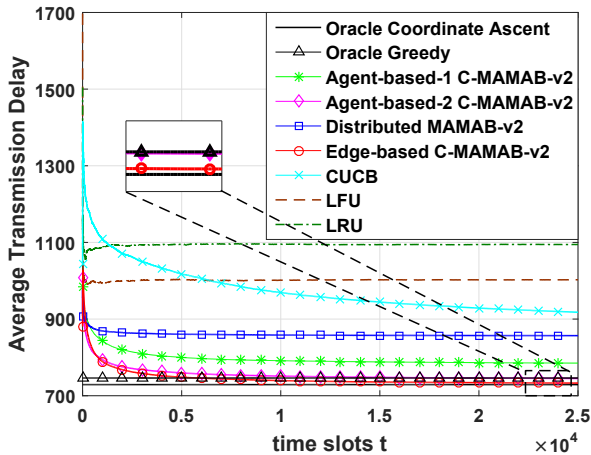}\\
\caption{{MAMAB-v2 algorithms with the same user preference.}} \label{fig:M=6_N=50_preference1}
\end{subfigure}
\hfill
\begin{subfigure}[t]{0.5\linewidth}
\centering
\includegraphics[width=8.5cm]{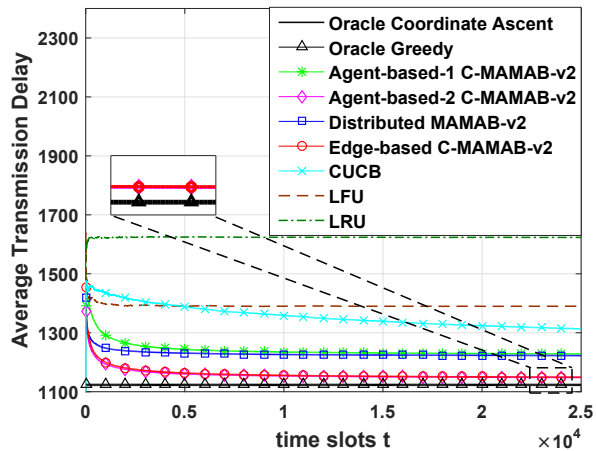}\\
\caption{MAMAB-v2 algorithms with different user preference.} \label{fig:M=8_N=100_preference1}
\end{subfigure}
\caption{{The time evolution of the average transmission delay with $M=6$, $U=50$, $l_c=50$ \text{m}, $F=100$ and $S=10$.}}
\vspace{-0.2cm}
\end{figure*}

Fig. \ref{fig:M=6_N=50_preference1} and Fig. \ref{fig:M=8_N=100_preference1} compare the performance of MAMAB-v2 with some benchmark algorithms. It is shown that the low-complexity oracle coordinate ascent algorithm outperforms oracle greedy algorithm and they both can be seen as the performance upper bound since they know the full knowledge of user preference. Besides, MAMAB-v2 algorithms outperform LRU and LFU since they concentrate on the reward that SBSs can obtain rather than users requests. Moreover, MAMAB-v2 outperforms CUCB algorithm since it distinguishes user preference and file popularity. Besides, MAMAB-v2 also has a faster learning speed than predict-then-optimize CUCB method when users have the same preference. For different MAMAB-v2 algorithms, the Edge-based C-MAMAB-v2 performs best and almost can achieve the performance of the oracle algorithms. The Agent-based-1 C-MAMAB outperforms the distributed MAMAB when users have the same preference since it considers the SBS coordination. However, when users have different preference, Agent-based-1 C-MAMAB-v2 almost performs the same as the distributed one since the correlations among SBSs are weak.

\begin{figure*}[t]
\vspace{-0.1cm}
\begin{minipage}[t]{0.5\linewidth}
\centering
\includegraphics[width=8.5cm]{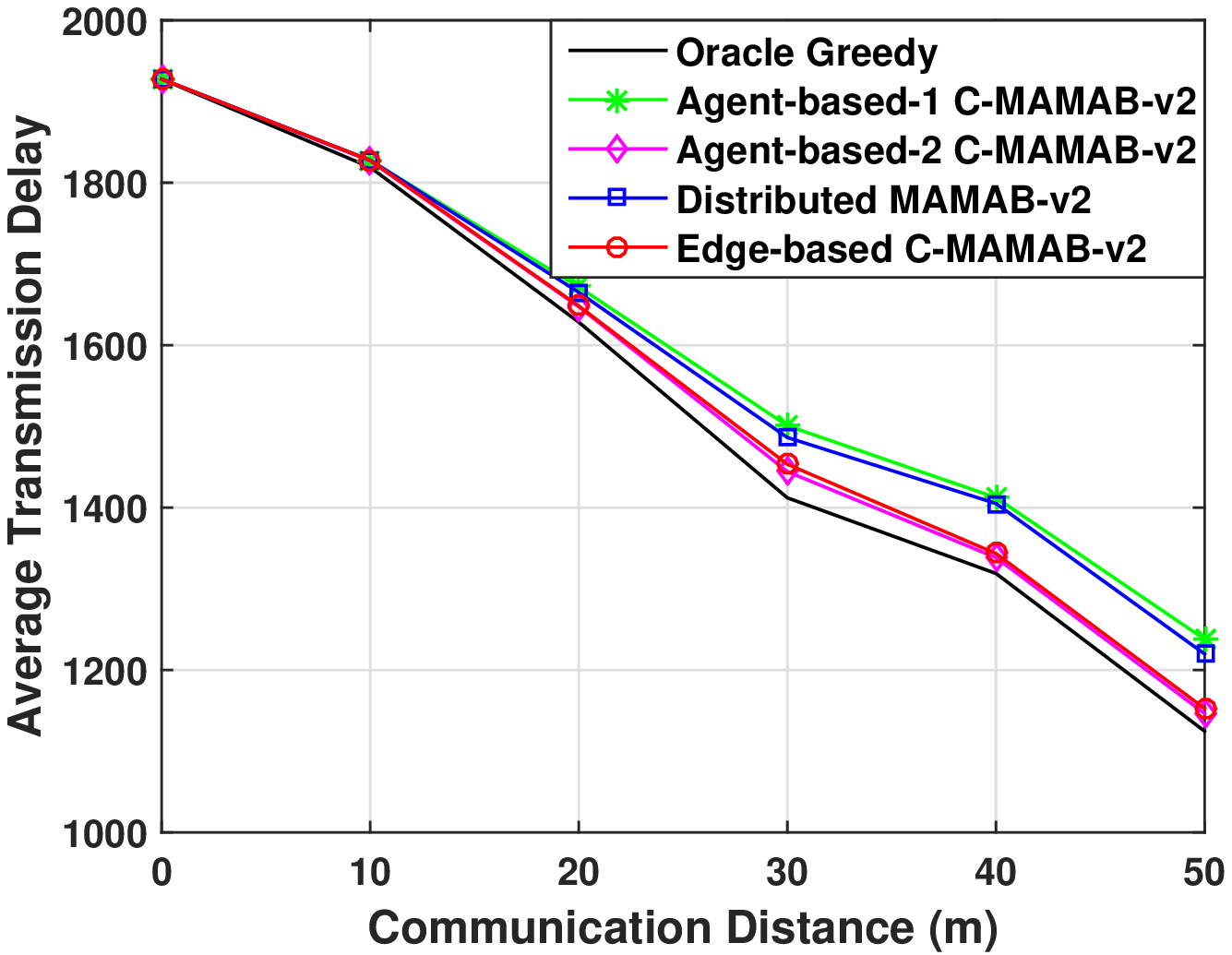}\\
\caption{{Average delay after $25000$ time slots of MAMAB-v2 algorithms vs. $l_c$ when users have different preferences with $M=6$, $U=50$ $F=100$ and $S=10$.}} \label{fig:communication_distance}
\end{minipage}
\hfill
\begin{minipage}[t]{0.5\linewidth}
\centering
\includegraphics[width=8.5cm]{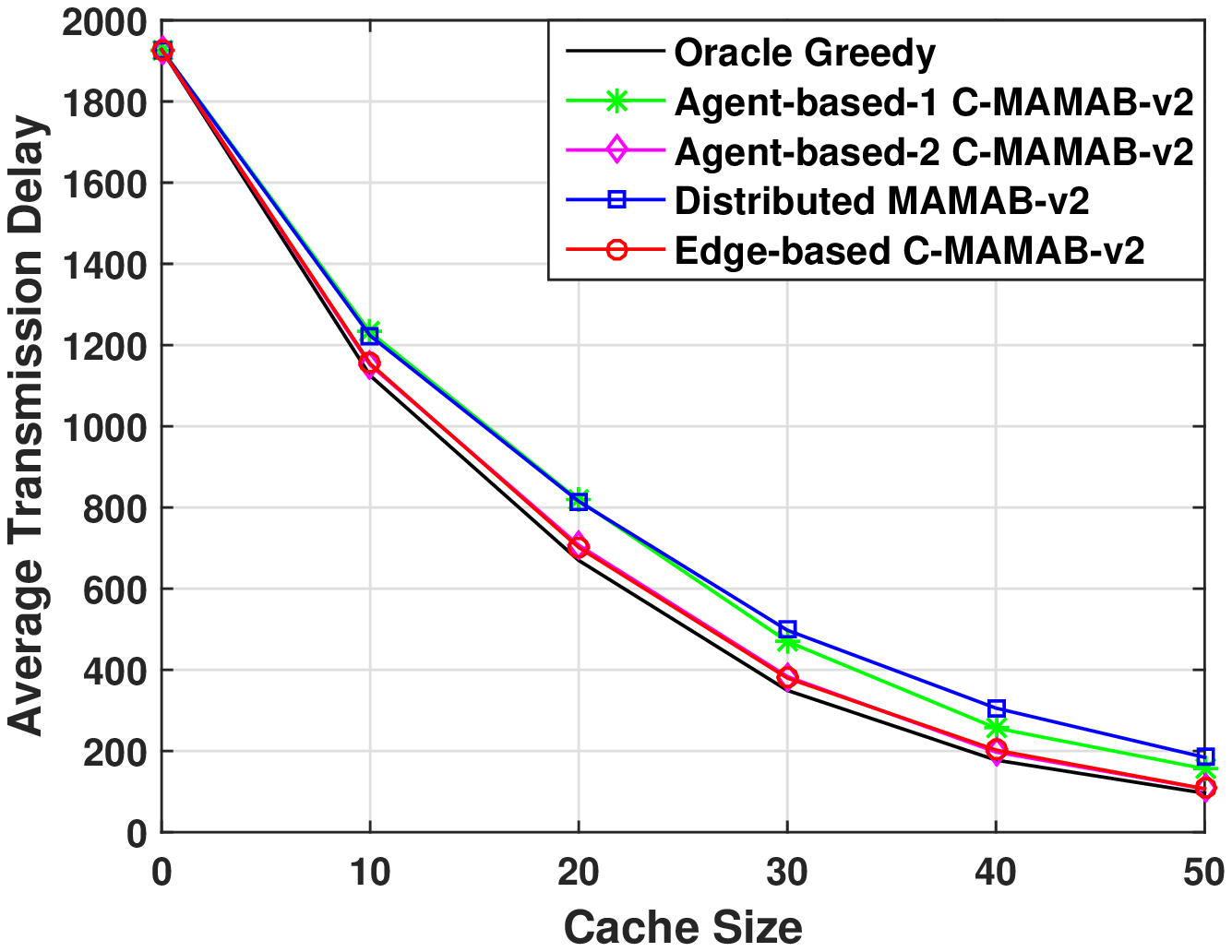}\\
\caption{{Average delay after $25000$ time slots of MAMAB-v2 algorithms vs. $S$ when users have different preferences with $M=6$, $U=50$, $F=100$ and $l_c=50$ \text{m}.}} \label{fig:cache_size}
\end{minipage}
\vspace{-0.4cm}
\end{figure*}

Fig. \ref{fig:communication_distance} illustrates the average transmission delay of different MAMAB-v2 algorithms after $25000$ time slots versus different communication distances $l_c$. By increasing $l_c$, the number of users that can be covered by multiple SBSs increases and the correlations among SBSs becomes stronger. Fig. \ref{fig:communication_distance} shows that both distributed and collaborative MAMAB-v2 algorithms almost can achieve the performance of the oracle greedy algorithm. This is because many users only have one neighbor SBS when $l_c$ is small and almost no correlations exist among SBSs. By increasing the communication distance $l_c$, it is observed that the gap between oracle greedy algorithm and distributed MAMAB-v2 increases. This is because distributed MAMAB-v2 totally ignores the correlations among SBSs while the correlations among SBSs become larger when $l_c$ grows. The gap between oracle greedy and Agent-based-1 also increases when $l_c$ grows since the performance becomes poor when {the action space} is too large. While for the Agent-based-2 and Edge-based C-MAMAB-v2, they perform close to the oracle greedy for all $l_c$ since they take both SBS coordination and action space into account.

Fig. \ref{fig:cache_size} depicts the average transmission delay of different algorithms after $25000$ time slots versus different cache size $S$. Note that the average transmission delay decreases as $S$ increases since more user requests can be satisfied by SBSs locally. However, the delay decreasing speed becomes smaller as the cache size increases, which is consistent with the Zipf distribution that a small number of popular files produce the majority of the data traffic. The Edge-based and Agent-based-2 C-MAMAB-v2 almost can achieve the performance of the oracle greedy. Comparing the performance of the Agent-based-1 C-MAMAB-v2 and distributed MAMAB-v2, it is observed that Agent-based-1 C-MAMAB-v2 performs better than distributed MAMAB-v2 when $S$ is large since it can make a better utilization of cache size by considering SBS coordination.

\subsection{Non-stationary Environment}
\begin{figure*}[t]
\vspace{-0.2cm}
\begin{minipage}[t]{0.5\linewidth}
\centering
\includegraphics[width=8.5cm]{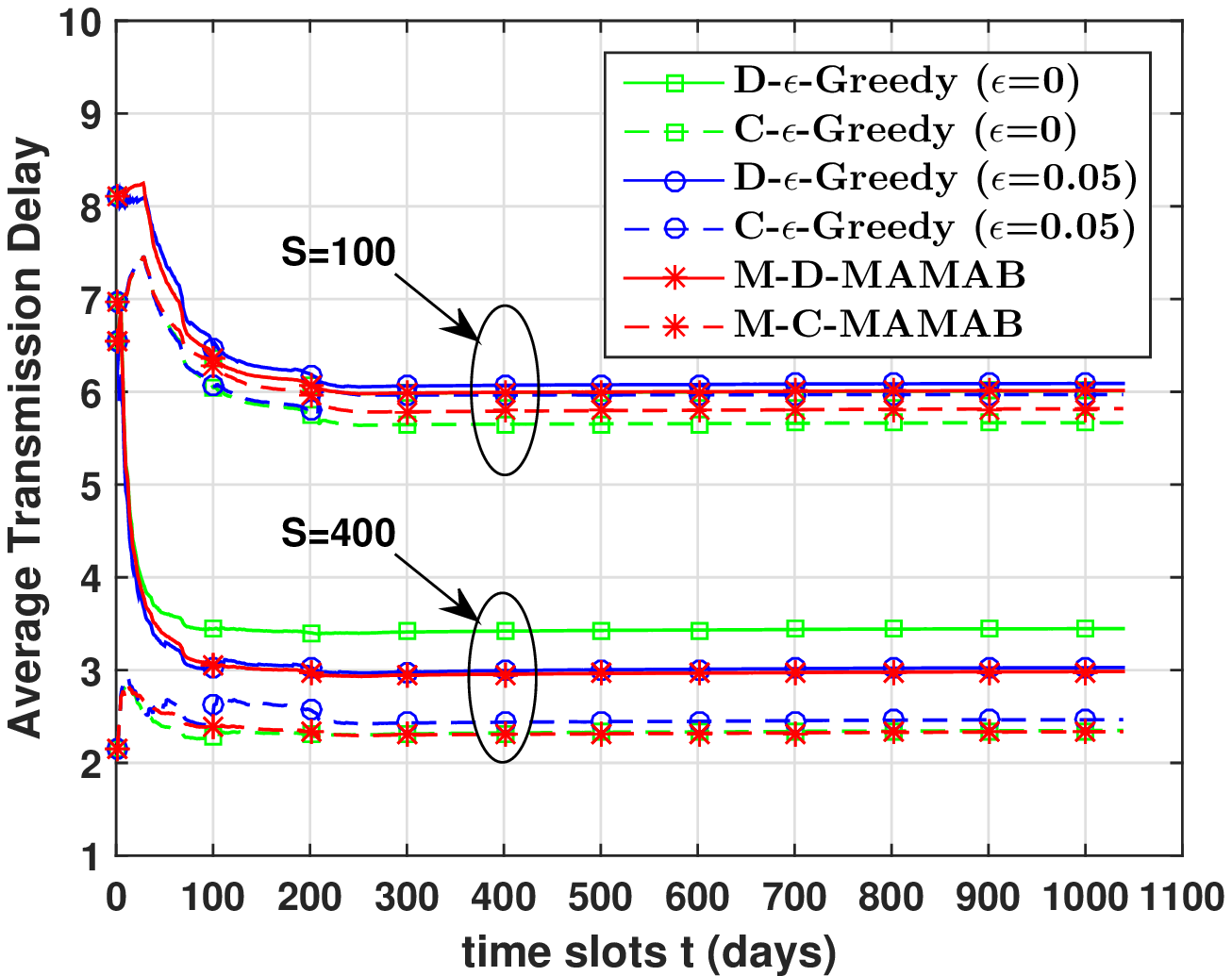}\\
\caption{{{Time evolution of algorithms for $M=5$ and $l_c=50$ $\text{m}$.}}} \label{fig:EE}
\end{minipage}
\hfill
\begin{minipage}[t]{0.5\linewidth}
\centering
\includegraphics[width=8.5cm]{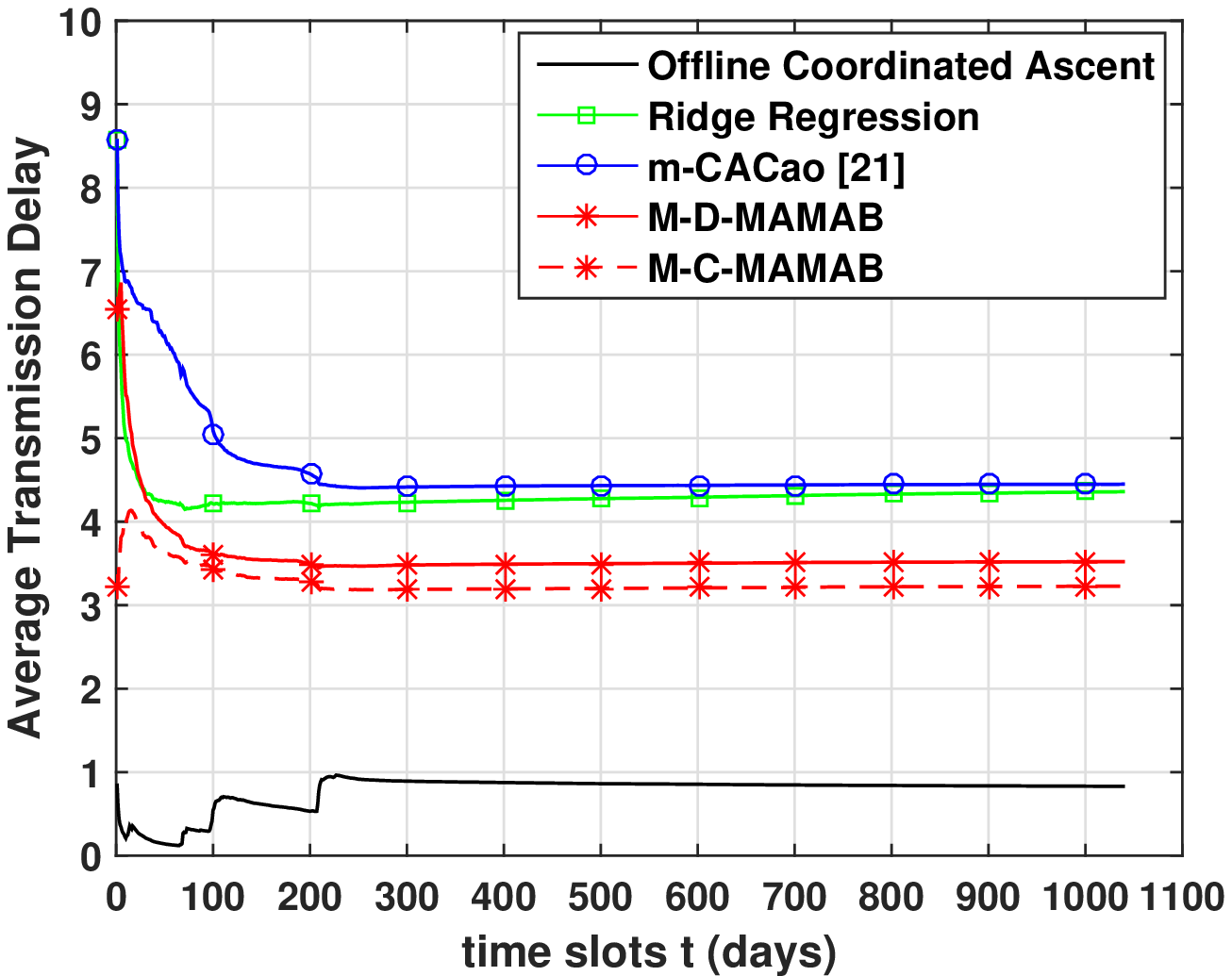}\\
\caption{{{Time evolution of algorithms for $M=5$, $l_c=40$ $\text{m}$ and $S=400$.}}} \label{fig:Modified}
\end{minipage}
\vspace{-0.3cm}
\end{figure*}

\begin{figure*}[t]
\vspace{-0.2cm}
\begin{minipage}[t]{0.5\linewidth}
\centering
\includegraphics[width=8.5cm]{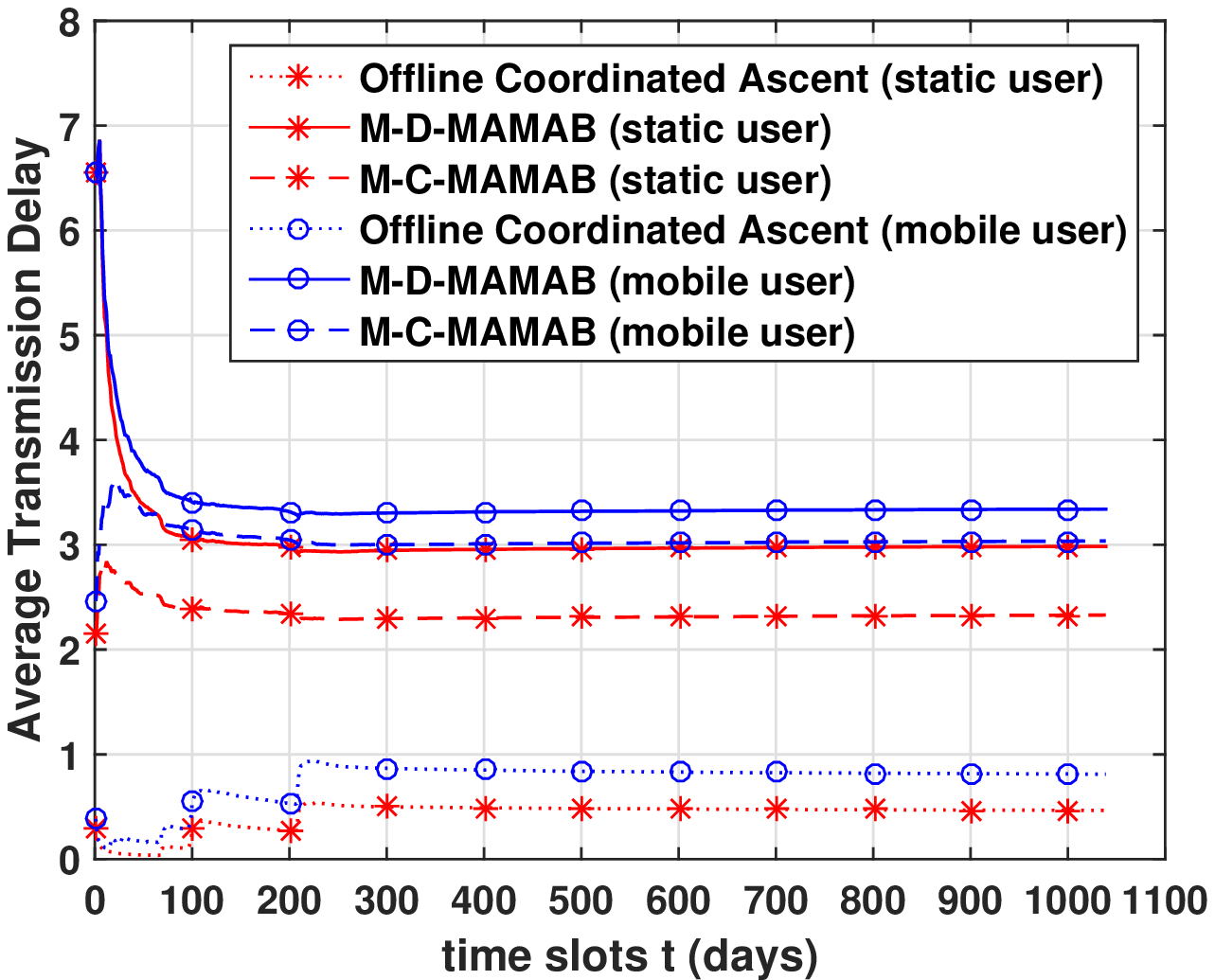}\\
\caption{{{Effects of users mobility on modified MAMAB algorithms for $M=5$, $l_c=50$ $\text{m}$ and $S=400$.}}} \label{fig:mobility}
\end{minipage}
\hfill
\begin{minipage}[t]{0.5\linewidth}
\centering
\includegraphics[width=8.5cm]{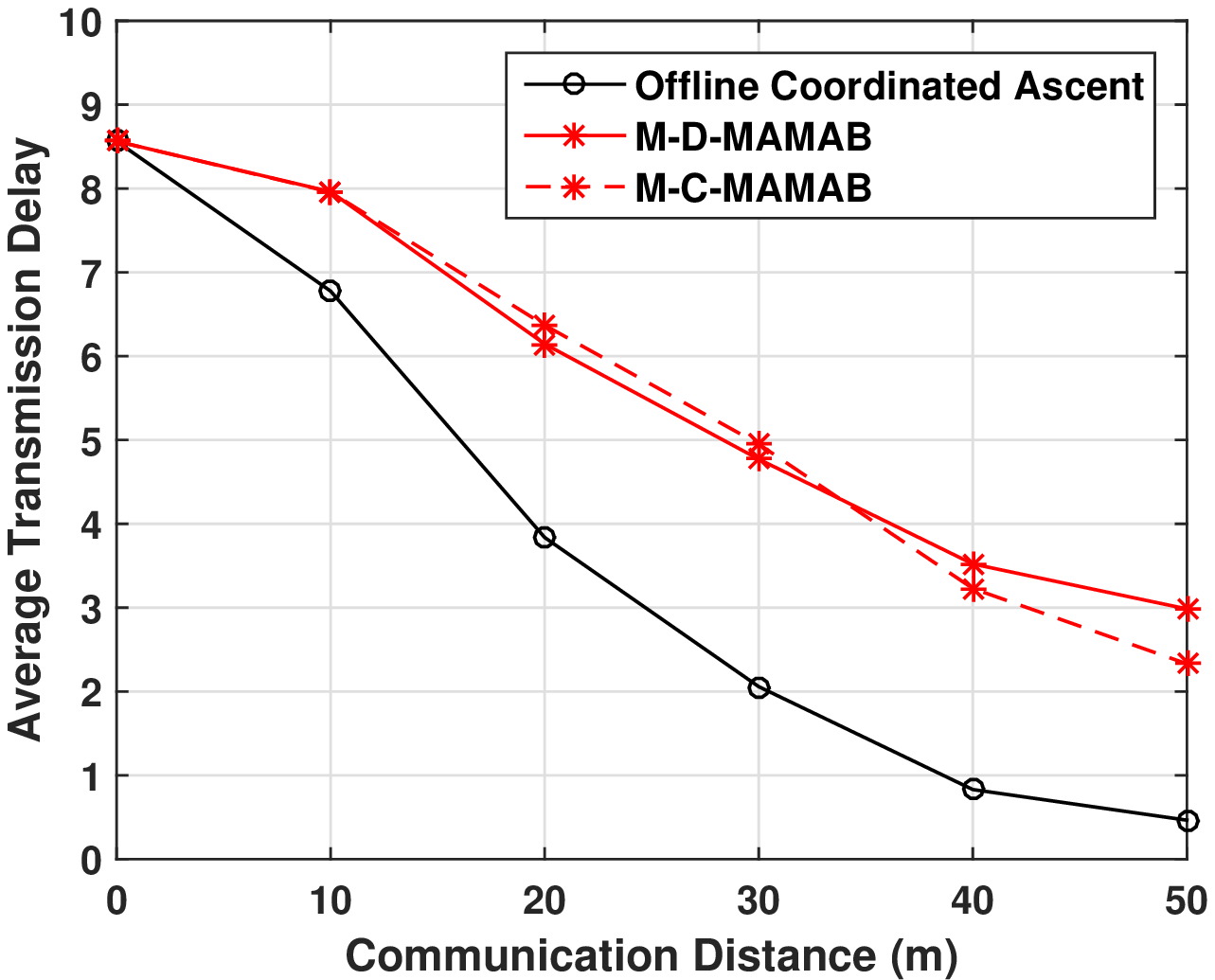}\\
\caption{{{Average delay of modified MAMAB algorithms after $1039$ days vs. communication distance for $M=5$ and $S=400$.}}} \label{fig:communication_distance_non}
\end{minipage}
\vspace{-0.3cm}
\end{figure*}
In this work, we model the non-stationary environment by using the $1M$ data set from MovieLens \cite{harper2016movielens}. This data set contains $1000209$ ratings of $|\mathcal{F}|=3952$ movies and these ratings were made by $U=6040$ users from the year $2000$ to $2003$. Due to the fact that a user only rating the movies after watching them, we assume that a rating of a user corresponds to a file request in this work. The time horizon is divided into $T_{\text{total}}=1039$ time slots (days) and SBSs update the cache once everyday. Unless otherwise stated, {Unless otherwise stated, we assume that locations of users are fixed at different time slots but the set of active users changes over time \footnote{Note that users always request files at fixed locations in practice, such as at home or office. Therefore, we only consider the dynamic active users set and ignore the mobility of users.}, and} the simulation parameters are set as follows: the number of SBSs $M=5$ {and cache size $S=400$.} We utilize the average transmission delay as the performance metric, which is defined as the ratio of the total transmission delay to the the number of requests. To evaluate the performance of our proposed algorithms, we compare them with the following benchmark algorithms:

$\bullet$ \textbf{Oracle Coordinated Ascent Algorithm}: Optimize the cache by utilizing the Algorithm \ref{alg:9} at each time slot when the user requests {and transmission delays between SBSs and users} are perfectly known in advance\footnote{The reason for choosing the coordinated ascent algorithm rather than the oracle greedy algorithm in the stationary environment is the high computation complexity of the greedy algorithm with large number of uses and files.}.

$\bullet$ {\textbf{Distributed/Edge-based Collaborative $\epsilon$-Greedy Algorithm (D/C-$\epsilon$-Greedy)}: SBSs cache files according to the $\epsilon$-greedy algorithm based on the corresponding  average reward rather than the estimated reward that contains the perturbed term.}

$\bullet$ \textbf{Ridge Regression Algorithm}: Each SBS estimates the number of requests for its neighbor users {via ridge regression by utilizing all requests information, rather than only utilizing the information of caching files as \cite{8466606},} and then caches the most popular files at each time slot.

$\bullet$ \textbf{m-CACao\cite{muller2017context}}: {SBSs} cache files according to the context-aware proactive caching algorithm.


{In this subsection, we first discuss the tradeoff between exploration and exploitation of modified MAMAB algorithms in the non-stationary environment. Then we validate the effectiveness of our proposed modified MAMAB algorithms. Finally, we discuss the effects of user mobility and communication distance $l_c$ of modified MAMAB algorithms.}
{For presentation convenience, our proposed modified MAMAB algorithm in distributed and edge-based collaborative manners are denoted as ``M-D-MAMAB'' and ``M-C-MAMAB'', respectively.}

{Fig. \ref{fig:EE} illustrates the performance of our proposed modified MAMAB algorithms and $\epsilon$-greedy algorithms with different $\epsilon$ values. It is observed that edge-based collaborative algorithms outperform the corresponding distributed algorithms. Comparing three distributed algorithms and three collaborative algorithms, we find that when the cache size is large ($S=400$), our proposed modified MAMAB algorithms with perturbed terms can achieve a better tradeoff compared to $\epsilon$-greedy algorithms. This is because large cache size can make SBSs play all actions sufficient times and hence our modified MAMAB algorithms have a fast learning speed to learn the reward of different actions accurately. However, when the cache size is small ($S=100$), it is illustrated that $\epsilon$-greedy algorithm with $\epsilon=0$ performs better than $\epsilon$-greedy algorithm with $\epsilon=0.05$ and modified MAMAB algorithms in both distributed and edge-based collaborative manners. This means that when the cache size is small, doing more exploitations is better since SBSs have a lower speed to learn the reward accurately with such a small cache size.}

{Fig. \ref{fig:Modified} illustrates the performance of our proposed modified MAMAB and some reference algorithms. It is observed that the oracle coordinate ascent performs best since it knows all information of user requests and transmission delays between users and SBSs in advance. M-C-MAMAB outperforms M-D-MAMAB since it considers the SBSs cooperation. Besides, our proposed modified MAMAB algorithms outperform the ridge regression and m-CACao since they focus on the reward rather than the request number from a set of users. Therefore, we can conclude that MAMAB is more appropriate to solve this sequential decision making problem than the ridge regression that first utilizes all requests information to predict local popularity and then caches the most popular files. Moreover, M-D-MAMAB can tackle the cooperative caching problem among SBSs better than existing algorithms.}


{In Fig. \ref{fig:mobility}, the effects of user mobility on the caching performance of modified MAMAB algorithms are evaluated. It is observed that modified MAMAB algorithms have a good performance no matter users are static or mobile. However, the gap between M-D-MAMAB and M-C-MAMAB becomes smaller when we take user mobility into account. This is because that M-D-MAMAB ignores the coordination among SBSs and has a smaller action space, and hence it has a faster learning speed to learn the reward accurately. Therefore, we can conclude that M-D-MAMAB is more robust than M-C-MAMAB when users are moving over time.}

{In Fig. \ref{fig:communication_distance_non}, the performance of modified MAMAB algorithms after $1039$ days versus $l_c$ is evaluated. By increasing $l_c$, the average transmission delay of modified MAMAB algorithms decreases since more users can be covered and served by SBSs locally as $l_c$ grows. Comparing the performance of the modified MAMAB in distributed and collaborative manners, they perform the same when $l_c$ is small ($l_c=10$ m). When $l_c$ becomes larger ($l_c=20, 30$ m), it is observed that the M-D-MAMAB even outperforms the collaborative one. This is because some joint actions of M-C-MAMAB only can serve a small number of users and have a small reward, but they still need to have enough explorations. Thus, SBSs choose these joint actions with small reward many times and lose the chance for more exploitations. But as $l_c$ keeps growing ($l_c=40,50$ m), the correlations among SBSs become stronger and the number of users that are covered by multiple SBSs becomes larger, M-C-MAMAB outperforms M-D-MAMAB since it considers the coordination among SBSs and achieves a good tradeoff between exploration and exploitation.}

\section{Conclusion}
In this work, we investigated the collaborative caching optimization problem to minimize the accumulated transmission delay without the knowledge of user preference. We model this sequential multi-agent decision making problem in a MAMAB perspective and solve it by learning cache strategy directly online in both stationary and non-stationary environment. In the stationary environment, we proposed multiple efficient MAMAB algorithms in both distributed and collaborative manners to solve the problem. In the agent-based collaborative MAMAB, we provided a strong performance guarantee by proving that its regret is bounded by $\mathcal{O}(\log{(T_{\text{total}}))}$. However, the computational complexity is high. For the distributed MAMAB, it has a low computational complexity but the correlations among SBSs are ignored. To achieve a better balance between SBS coordination and computational complexity, we also proposed a coordination graph edge-based reward assignment method in the edge-based collaborative MAMAB. In the non-stationary environment, we modified the MAMAB-based algorithms {proposed in the stationary environment by proposing a practical initialization method and designing new perturbed terms to adapt to the dynamic environment better.} Simulation results demonstrated the effectiveness of our proposed algorithms. The effects of the communication distance and cache size were also discussed.

\begin{appendices}
\section{Proof of Lemma 1}

$\textbf{A}_{\mathcal{M}\backslash \{m\}}^L=[\textbf{a}^L_1,\textbf{a}^L_2,\ldots,\textbf{a}^L_{m-1},\mathbf{0}_{F\times 1},\textbf{a}^L_{m+1},\ldots,\textbf{a}^L_M]$ is the cache strategy that SBS $m$ caches nothing and the other $M-1$ SBSs cache according to $\textbf{A}^L$. $\textbf{A}^*=[\textbf{a}^*_1,\textbf{a}^*_2,\ldots,\textbf{a}^*_M]^T$ is the optimal cache strategy. We utilize $\cup$ to combine different cache strategies together, e.g., $\textbf{a}_m^*\cup\textbf{A}^L=[\textbf{a}^L_1,\textbf{a}^L_2,\ldots,\textbf{a}^L_{m-1},\textbf{a}^L_{m}\cup\textbf{a}_m^*,\textbf{a}^L_{m+1},\ldots,\textbf{a}^L_M]$, which means that SBS $m$ caches files in both $\textbf{a}_m^*$ and $\textbf{a}_m^L$ while the other $M-1$ SBSs cache files according to $\textbf{A}^L$, and $\textbf{A}^*\cup\textbf{A}^L=[\textbf{a}^L_{1}\cup\textbf{a}_1^*,\ldots,\textbf{a}^L_{M}\cup\textbf{a}_M^*]$, which means that each SBS $m$ cache files in both $\textbf{a}_m^*$ and $\textbf{a}_m^L$. Then, we can obtain:
\begin{align}
&R_{\text{total}}(\textbf{A}^*)-R_{\text{total}}(\textbf{A}^L)\overset{(a)}{\leq} R_{\text{total}}(\textbf{A}^*\cup\textbf{A}^L)-R_{\text{total}}(\textbf{A}^L) \nonumber\\
&\overset{(b)}{\leq}\sum_{m=1}^M \left[R_{\text{total}}(\textbf{a}_m^*\cup\textbf{A}^L)-R_{\text{total}}(\textbf{A}^L)\right]   \nonumber\\
&\overset{(c)}{\leq}\sum_{m=1}^M \left[R_{\text{total}}(\textbf{a}_m^*\cup\textbf{A}_{\mathcal{M}\backslash \{m\}}^L)-R_{\text{total}}(\textbf{A}_{\mathcal{M}\backslash \{m\}}^L)\right] \nonumber\\
&\overset{(d)}{\leq}\sum_{m=1}^M \left[R_{\text{total}}(\textbf{a}_m^L\cup\textbf{A}_{\mathcal{M}\backslash \{m\}}^L)-R_{\text{total}}(\textbf{A}_{\mathcal{M}\backslash \{m\}}^L)\right]  \nonumber\\
&\overset{(e)}{\leq}\sum_{m=1}^M\left[R_{\text{total}}(\textbf{a}_m^L\cup\textbf{A}_{\mathcal{M}\backslash \{m,m+1,\ldots,M\}}^L)-R_{\text{total}}(\textbf{A}_{\mathcal{M}\backslash \{m,m+1,\ldots,M\}}^L)\right] \nonumber\\ &=R_{\text{total}}(\textbf{A}^L),
\end{align}
where step (a) follows from the fact that caching more files increases the reward. Step (b) is obtained from that for any strategies $\textbf{a}'_m$ and $\textbf{a}'_n$ and $\textbf{A}$, we have $R_{\text{total}}(\textbf{a}'_m\cup\textbf{a}_n'\cup\textbf{A})-R_{\text{total}}(\textbf{A})\leq[R_{\text{total}}(\textbf{a}_m'\cup\textbf{A})-R_{\text{total}}(\textbf{A})]+[R_{\text{total}}(\textbf{a}'_n\cup\textbf{A})-R_{\text{total}}(\textbf{A})]$
since SBS $m$ and $n$ may cover the common users. For step (c) and (e), they are obtained from $R_{\text{total}}(\textbf{a}'_m\cup\textbf{A})-R_{\text{total}}(\textbf{A})\leq R_{\text{total}}(\textbf{a}'_m\cup\textbf{A}_{\mathcal{M}\backslash\{m\}})-R_{\text{total}}(\textbf{A}_{\mathcal{M}\backslash\{m\}})$. This is because the cache strategy $\textbf{a}'_m$ and $\textbf{a}_m$ may cache the common files. Step (d) follows from that $\textbf{a}_m^L$ is the optimal cache strategy of SBS $m$ when other $M-1$ SBSs adopt according to $\textbf{A}^L_{\mathcal{M}\backslash\{m\}}$ obtained from Algorithm \ref{alg:9}. Therefore, we have $R_{\text{total}}(\textbf{A}^L)\geq\frac{1}{2}R_{\text{total}}(\textbf{A}^*)$, and the proof is completed.

\section{Proof of Lemma 2}
The proofs of two agent-based collaborative MAMAB are similar and we only focus on the agent-based collaborative MAMAB in a SBS perspective. The reward of SBS $m$ with joint action $\textbf{b}_{m,f}$ is a random variable $R_{m,f}(\textbf{b}_{m,f})\in\left[0,B_{\text{a-coll},m}(\textbf{b}_{m,f})\right]$ with expectation $\mu_{m,f}(\textbf{b}_{m,f})$. The optimal reward is denoted as $r_{\boldsymbol{\mu}} (\textbf{A}^*)=\text{opt}_{\boldsymbol{\mu}}$ respect to the the expectation vector $\boldsymbol{\mu}$.
For variable $R_{m,f}(\textbf{b}_{m,f})$, the mean value of it after this joint action $\textbf{b}_{m,f}$ appears $s$ times is $\overline{R}_{m,f}^s(\textbf{b}_{m,f})=(\sum_{i=1}^sr_{m,f}^s(\textbf{b}_{m,f}))/s$. In the $t$-th time slot, let $F_t$ be the event that the oracle fails to produce an $\alpha$-approximate answer with respect to the estimated reward $\widehat{R}_{m,f}^t(\textbf{b}_{m,f})$. We have $P[F_t]=\mathbb{E}[\mathbb{I}\{F_t\}]\leq 1-\beta$ \footnote{The coordinate ascent oracle is proven to be an $(\alpha,\beta)$ approximation with $\alpha=\frac{1}{2}$ and $\beta=1$ in Appendix A.}. The bad cache actions is denoted as $\textbf{A}_B=\{\textbf{A}|r_{\boldsymbol{\mu}} (\textbf{A})<\alpha\cdot \text{opt}_{\boldsymbol{\mu}}\}$. Then we define $\Delta_{\text{min}}=\alpha\cdot \text{opt}_{\boldsymbol{\mu}}-\underset{\textbf{A}\in\textbf{A}_B}{\text{max}}\{r_{\boldsymbol{\mu}}(\textbf{A})\}$ and $\Delta_{\text{max}}=\alpha\cdot \text{opt}_{\boldsymbol{\mu}}-\underset{\textbf{A}\in\textbf{A}_B}{\text{min}}\{r_{\boldsymbol{\mu}}(\textbf{A})\}$.

We maintain counter $N_{m,f}(\textbf{b}_{m,f})$ for each joint action $\textbf{b}_{m,f}$ and we have $N_{m,f}^{0}(\textbf{b}_{m,f})=1$ since each joint action $\textbf{b}_{m,f}$ appear once in the initial phase. At each time slot $t$, if we choose a bad action $\textbf{A}_B$, we increase one minimum counter $N_{m,f}(\textbf{b}_{m,f})$ by one at this bad time slot, i.e., $N^{t}_{m,f}(\textbf{b}_{m,f})=N^{t-1}_{m,f}(\textbf{b}_{m,f})+1$. Therefore, we have $N^{t}_{m,f}(\textbf{b}_{m,f})\leq T^{t}_{m,f}(\textbf{b}_{m,f})$.

Define $Q_t=\underset{\textbf{b}_{m,f}}{\text{max}}B_{\text{a-coll},m}^2(\textbf{b}_{m,f})\frac{6 \log t}{(f^{-1}(\Delta_{\text{min}}))^2}$, where $f(\cdot)$ is the bounded smoothness function\cite{chen2013combinatorial}. For a bad time slot $t$ with action $\textbf{A}^t\in\textbf{A}_B$, we have
\begin{align}
&\sum_{m=1}^M\sum_{f=1}^F\sum_{\textbf{b}_{m,f}} N^{T_{\text{total}}}_{m,f}(\textbf{b}_{m,f})-\sum_{m=1}^M\sum_{f=1}^F\sum_{\textbf{b}_{m,f}}Q_{T_{\text{total}}}  \nonumber\\
&=\sum_{t=1}^{T_{\text{total}}} \mathbb{I}\{\textbf{A}^t\in\textbf{A}_B\}+\sum_{m=1}^M2^{\Gamma(m)}F-\sum_{m=1}^M\sum_{f=1}^F\sum_{\textbf{b}_{m,f}}Q_{T_{\text{total}}} \nonumber\\
&\leq \sum_{t=1}^{T_{\text{total}}}\sum_{m=1}^M\sum_{f=1}^F\sum_{\textbf{b}_{m,f}}\mathbb{I}\{\textbf{A}^t\in\textbf{A}_B,N^t_{m,f}(\textbf{b}_{m,f})>N_{m,f}^{t-1}(\textbf{b}_{m,f}),\nonumber\\
&~~~~~~~~~~~~~~~~~~~~~~~~~~~N_{m,f}^{t-1}(\textbf{b}_{m,f})>Q_{T_{\text{total}}}\}+M2^{M-1}F \nonumber\\
&\leq \sum_{t=1}^{T_{\text{total}}}\sum_{m=1}^M\sum_{f=1}^F\sum_{\textbf{b}_{m,f}}\mathbb{I}\{\textbf{A}^t\in\textbf{A}_B,N^t_{m,f}(\textbf{b}_{m,f})>N_{m,f}^{t-1}(\textbf{b}_{m,f}), \nonumber\\
&~~~~~~~~~~~~~~~~~~~~~~~~~~~N_{m,f}^{t-1}(\textbf{b}_{m,f})>Q_t\}+M2^{M-1}F  \nonumber\\
&=\sum_{t=1}^{T_{\text{total}}}  \mathbb{I}\{\textbf{A}^t\in\textbf{A}_B, \forall ~\textbf{b}^t_{m,f},N_{m,f}^{t-1}(\textbf{b}^t_{m,f})>Q_t\}+M2^{M-1}F \nonumber\\
&\leq \sum_{t=1}^{T_{\text{total}}} \left(\mathbb{I}\{F_t\}+\mathbb{I}\{\neg F_t, \textbf{A}^t\in\textbf{A}_B, \forall~ \textbf{b}^t_{m,f},N_{m,f}^{t-1}(\textbf{b}^t_{m,f})>Q_t\}\right)\nonumber\\
&~~~~~~~~~~~~~~~~~~~~~~~~~~~+M2^{M-1}F \nonumber\\
&\leq \sum_{t=1}^{T_{\text{total}}} \left(\mathbb{I}\{F_t\}+\mathbb{I}\{\neg F_t, \textbf{A}^t\in\textbf{A}_B, \forall~ \textbf{b}^t_{m,f},T_{m,f}^{t-1}(\textbf{b}^t_{m,f})>Q_t\}\right)\nonumber\\
&~~~~~~~~~~~~~~~~~~~~~~~~~~~+M2^{M-1}F.
\end{align}

Note that we have
\begin{align}
&P\bigg[|\overline{R}_{m,f}^{T_{m,f}^{t-1}(\textbf{b}_{m,f})}(\textbf{b}_{m,f})-\mu_{m,f}(\textbf{b}_{m,f})| \nonumber\\
&~~~~~~~~~~~~~~~~\geq B_{\text{a-coll},m}(\textbf{b}_{m,f})\sqrt{{3\log t}/{2T_{m,f}^{t-1}(\textbf{b}_{m,f})}}\bigg] \nonumber\\
&=\sum_{s=1}^{t}P\left[T_{m,f}^{t-1}(\textbf{b}_{m,f})=s\right]  \nonumber\\
&P\left[|\overline{R}_{m,f}^s(\textbf{b}_{m,f})-\mu_{m,f}(\textbf{b}_{m,f})|\geq B_{\text{a-coll},m}(\textbf{b}_{m,f})\sqrt{3\log t/2s}\right]\nonumber\\
&\leq \sum_{s=1}^{t} P\big[|\overline{R}_{m,f}^s(\textbf{b}_{m,f})-\mu_{m,f}(\textbf{b}_{m,f})|  \nonumber\\
&~~~~~~~~~~~~~~~~\geq B_{\text{a-coll},m}(\textbf{b}_{m,f})\sqrt{{3\log t}/{2s}}\big] \nonumber\\
&\leq 2t^{-2},
\end{align}
where the last inequality comes from the Chernoff-Hoeffding Inequality.

Define a random variable $\Lambda_{m,f}^t(\textbf{b}_{m,f})=B_{\text{a-coll},m}(\textbf{b}_{m,f})\sqrt{{3\log t}/{2T_{m,f}^{t-1}(\textbf{b}_{m,f})}}$ and event

$\!\!\!\!\!\!\!E_t=\{\forall ~\textbf{b}_{m,f}, |\overline{R}_{m,f}^{T_{m,f}^{t-1}(\textbf{b}_{m,f})}(\textbf{b}_{m,f})-\mu_{m,f}(\textbf{b}_{m,f})|\leq \Lambda_{m,f}^t(\textbf{b}_{m,f})\}$. We have $P(\neg E_t)\leq M2^{M}Ft^{-2}$. From (\ref{eqn:est_agent1}), we have $\widehat{R}_{m,f}^{t}(\textbf{b}_{m,f})-\overline{R}_{m,f}^{T_{m,f}^{t-1}(\textbf{b}_{m,f})}(\textbf{b}_{m,f})=\Lambda_{m,f}^t(\textbf{b}_{m,f})$. Thus, $|\overline{R}^{T_{m,f}^{t-1}(\textbf{b}_{m,f})}(\textbf{b}_{m,f})-\mu_{m,f}(\textbf{b}_{m,f})|\leq \Lambda^t(\textbf{b}_{m,f})$ implies that $\widehat{R}_{m,f}^{t}(\textbf{b}_{m,f})\geq \mu_{m,f}(\textbf{b}_{m,f})$. In other words, we have $E_t\Longrightarrow  \widehat{\textbf{r}}^{t}\geq \boldsymbol{\mu}$, where $\widehat{\textbf{r}}^{t}$ is the vector of $\widehat{R}_{m,f}^{t}(\textbf{b}_{m,f})$ for all $\textbf{b}_{m,f}$.

Let $\Lambda=\underset{\textbf{b}_{m,f}}{\text{max}}B_{\text{a-coll},m}(\textbf{b}_{m,f})\sqrt{\frac{3\log t}{2Q_t}}$ and define $\Lambda^t=\text{max}\{\Lambda_{m,f}^t(\textbf{b}^t_{m,f})|\forall~\textbf{b}^t_{m,f}\}$. Then we have
\begin{align}
&E_t\Longrightarrow |\widehat{R}_{m,f}^{t}(\textbf{b}^t_{m,f})-\mu_{m,f}(\textbf{b}^t_{m,f})|\leq2 \Lambda^t, \forall~ \textbf{b}^t_{m,f},   \\
&\{\textbf{A}^t\in \textbf{A}_B, \forall~ \textbf{b}^t_{m,f}, T_{m,f}^{t-1}(\textbf{b}^t_{m,f})>Q_t\} \Longrightarrow \Lambda>\Lambda^t.
\end{align}

Therefore, if $\{E_t,\neg F_t,\textbf{A}^t\in \textbf{A}_B, \forall~\textbf{b}^t_{m,f}, T_{m,f}^{t-1}(\textbf{b}^t_{m,f})>Q_t\}$ holds at time slot $t$, we have
\begin{align}
&~~r_{\boldsymbol{\mu}}(\textbf{A}^t)+f(2\Lambda)>r_{\boldsymbol{\mu}}(\textbf{A}^t)+f(2\Lambda^t)\geq r_{\widehat{\textbf{r}}^{t}}(\textbf{A}^t)\geq \alpha \cdot \text{opt}_{\widehat{\textbf{r}}^{t}} \nonumber\\
&\geq \alpha \cdot r_{\widehat{\textbf{r}}^{t}}(\textbf{A}^*)  \geq \alpha \cdot r_{\boldsymbol{\mu}}(\textbf{A}^*)= \alpha \cdot \text{opt}_{\boldsymbol{\mu}}.
\end{align}

Since we have $f(2\Lambda)=\Delta_{\text{min}}$, $r_{\boldsymbol{\mu}}(\textbf{A}^t)+f(2\Lambda)>\alpha \cdot \text{opt}_{\boldsymbol{\mu}}$ contradicts the definition of $\Delta_{\text{min}}$. Therefore, we have $P[\{E_t,\neg F_t,\textbf{A}^t\in \textbf{A}_B, \forall~\textbf{b}^t_{m,f}, T_{m,f}^{t-1}(\textbf{b}^t_{m,f})>Q_t\}]=0$. Then we can conclude that $P[\{\neg F_t,\textbf{A}^t\in \textbf{A}_B, \forall~\textbf{b}^t_{m,f}, T_{m,f}^{t-1}(\textbf{b}^t_{m,f})>Q_t\}\}] \leq P[\neg E_t]\leq M2^MFt^{-2}$.
Then we have
\begin{align}
&\mathbb{E}[\sum_{m=1}^M\sum_{f=1}^F\sum_{\textbf{b}_{m,f}} N^{T_{\text{total}}}_{m,f}(\textbf{b}_{m,f})] \leq \sum_{m=1}^M\sum_{f=1}^F\sum_{\textbf{b}_{m,f}}Q_{T_{\text{total}}}  \nonumber\\
&~~~~~~~~~~~~~~+(1-\beta)T_{\text{total}}+\sum_{t=1}^{T_{\text{total}}}M2^MFt^{-2}+M2^{M-1}F \nonumber\\
&\leq \underset{\textbf{b}_{m,f}}{\text{max}}B_{\text{a-coll},m}^2(\textbf{b}_{m,f})\frac{6M2^{M-1}F \log T_{\text{total}}}{(f^{-1}(\Delta_{\text{min}}))^2} \nonumber\\
&~~~~~~~~~~~~~~+M2^{M-1}F(\frac{\pi^2}{3}+1)+(1-\beta)T_{\text{total}}.
\end{align}

Therefore, the regret is given by
\begin{align}
&Z_{\text{regret}}=\sum_{t=1}^{T_{\text{total}}}\mathbb{E}[R_{\text{total}}(A^*)-R_{\text{total}}(A^t)] \nonumber\\
&\leq T_{\text{total}}\alpha\beta \text{opt}_{\boldsymbol{\mu}}-\bigg[T_{\text{total}}\alpha \text{opt}_{\boldsymbol{\mu}} \nonumber\\
&~~~~~~~-\mathbb{E}[\sum_{m=1}^M\sum_{f=1}^F\sum_{\textbf{b}_{m,f}} N^{T_{\text{total}}}_{m,f}(\textbf{b}_{m,f})-M2^{M-1}F]\cdot \Delta_{\text{max}}\bigg] \nonumber\\
&\leq \bigg[\underset{\textbf{b}_{m,f}}{\text{max}}B_{\text{a-coll},m}^2(\textbf{b}_{m,f})\frac{6M2^{M-1}F \log T_{\text{total}}}{(f^{-1}(\Delta_{\text{min}}))^2}+\frac{M2^MF\pi^2}{6} \nonumber\\
&~~~~~~~+(1-\beta)T_{\text{total}}\bigg]\Delta_{\text{max}}-(1-\beta)T_{\text{total}}\alpha \text{opt}_{\boldsymbol{\mu}}\nonumber\\
&\leq \!\left[\!\underset{\textbf{b}_{m,f}}{\text{max}}B_{\text{a-coll},m}^2(\textbf{b}_{m,f})\frac{6M2^{M-1}F \log T_{\text{total}}}{(f^{-1}(\Delta_{\text{min}}))^2}\!+\!\frac{M2^MF\pi^2}{6}\!\right]\!\!\Delta_{\text{max}}.
\end{align}
Therefore, the proof is completed. For the agent-based collaborative MAMAB in a user perspective, the proof is similar and hence is omitted here.
\end{appendices}
\bibliographystyle{IEEEtran}
\bibliography{IEEEabrv,MMAB}

\end{document}